\documentclass[10pt]{article}

\usepackage[preprint]{tmlr}

\usepackage[utf8]{inputenc}
\usepackage[T1]{fontenc}
\usepackage{amsmath,amssymb,amsthm}
\usepackage{booktabs}
\usepackage{xcolor}
\usepackage{graphicx}
\usepackage{tikz}
\usetikzlibrary{arrows.meta,positioning,calc,fit,backgrounds}
\usepackage{pgfplots}
\pgfplotsset{compat=1.16}
\usepackage{hyperref}
\hypersetup{colorlinks=true,linkcolor=blue!50!black,citecolor=blue!40!black,urlcolor=blue!50!black}

\newcommand{\Leray}{\mathbb{P}}
\newcommand{\Om}{\Omega}
\newtheorem{proposition}{Proposition}

\definecolor{teal}{HTML}{0B6E6E}

\title{Solver Exactness, Learned Flexibility:\\
Equivariant Boundary-Correction Operators for Stokes Flow}

\author{\name Denis Gueyffier \\ \addr Direction Scientifique G\'en\'erale, ONERA, Institut Polytechnique de Paris, Palaiseau, France}

\begin{document}
\maketitle

\begin{abstract}
Computing the viscous Stokes flow around a shape requires solving a boundary-integral equation, and for a new shape the solve must begin from scratch. Learned operators promise to spread this cost across shapes, but it is not clear what such an operator retains of the solver it replaces, or what determines whether it transfers to shapes it was not trained on. We make both questions answerable by choosing a problem which is exactly solvable except for a single term: a second-kind boundary-integral problem is solved exactly using a kernel-independent fast summation, while the boundary correction, which has no closed form, is learned. The solver's guarantees carry over unchanged: exactness on the closed-form part, $O(N)$ scaling, $SO(3)$-equivariance to machine precision, and an $O(N)$ differentiable adjoint.

We then make precise what the learning contributes. It does not contribute to accuracy, differentiability, or scaling $O(N)$, all of which are provided by the solver. Learning contributes  a one-time cost in that the forward map is trained once and then evaluated on a new shape in a single pass rather than resolved. Measured against baselines, the learned map is $5$ to $16\times$ more data-efficient than a black-box DeepONet and maintains a $\sim\!2.5\times$ lower in-distribution error than a geometry-aware operator, although that operator is stronger in the low-data limit and the learned map is less reliable out of distribution. We trace this fragility to the global parameterization and reduce it with a local equivariant kernel. The exactly-solvable setting yields clarity about the mechanism: geometric generalization is governed by invariance and coverage, not by conditioning or by capacity.
\end{abstract}

\section{Introduction}
In the Stokes regime, shape design, many-particle suspensions, and microswimmer locomotion all ask for the same flow to be computed across many shapes or many bodies. A boundary-integral solver answers each query exactly and certifiably but pays this cost again for every new geometry; a learned surrogate answers cheaply, but degrades unpredictably once the geometry leaves its training regime. We take neither side of this trade-off. Where the governing operator is known, we use it exactly, with the solver's guarantees; where it is not, we learn it. For most PDEs this is hard to act on, because it is unclear which part of the solution operator counts as ``known''. Incompressible Stokes flow is an exception, and the repeated evaluation regime above is what makes the split worth taking: an operator trained once and then applied to every shape replaces the per-geometry solve with a single one-time cost. This blend of exact structure and repeated evaluation constitutes our testbed.

The incompressible Stokes equations carry a sharp structural fact that operator learning rarely
exploits. The operator that enforces the divergence-free constraint, the Leray projector $\Leray$, is, in free space, governed by a \emph{single} kernel: the Stokeslet, or Oseen tensor, fixed by the
equations and equivariant under rotation and translation. \emph{Nothing about it depends on the data.} On
this kernel the boundary-integral method gives a certifiable solver: exact to machine precision, $O(N)$
with fast multipole acceleration \citep{greengard1987}, and reproducing the analytic Green's function to the quadrature limit.
The data dependence enters in exactly one place: the domain boundary. In a bounded geometry the
projector is the free-space operator plus a boundary correction with no closed form.

This is the opposite of how neural operators are usually built. A neural operator maps an input function
to a solution function as a composition of integral layers, each with a \emph{learnable} kernel
\citep{li2020gkn,kovachki2023no}, translation-invariant in the Fourier neural operator
\citep{li2020fno}, locally supported in graph neural operators, hierarchical in the multipole graph
neural operator \citep{li2020mgno}, so the whole solution operator, core included, is fit from data.
Here the core is not fit at all: it is known, and known exactly. This leaves a precise division
of labor, which is the inductive bias we study (Figure~\ref{fig:pipeline}):
\begin{quote}
\emph{Build the free-space core exactly and equivariantly; learn only the boundary correction.}
\end{quote}

Our setup goes beyond a clean testbed: boundary-integral methods are the standard solver for Stokes flows (microfluidics, particulate suspensions, and microswimmer hydrodynamics) where the linearity and the known Stokeslet make them the natural tool. Our paper makes the split operational and then stress-tests it. Our contributions:
\begin{enumerate}\itemsep1pt
\item \textbf{A rigid-core / learned-boundary inductive bias} for elliptic PDE operators
(Section~\ref{sec:split}), with the free-space far-field built as steerable fast multipole operators (rotation-equivariant by a fixed spherical-harmonic basis; \citealp{weiler20183d})
which are \emph{equivariant by construction}, exact to machine precision. This is in contrast to enforcing equivariance by data augmentation.
\item \textbf{Conditioning is not the bottleneck} (Section~\ref{sec:negative}): reformulating the
learned boundary correction from an ill-conditioned first-kind to a well-conditioned second-kind
Fredholm operator does \emph{not} improve learnability. Conditioning is not
the bottleneck and to our knowledge this reformulation has not been isolated before.
\item \textbf{Geometric generalization, validated and bounded} (Section~\ref{sec:gen}): a
geometry-conditioned learned operator generalizes to unseen \emph{regular} shapes within the training
envelope, matching the exact reference in the interior; it fails to extrapolate beyond the envelope.
\item \textbf{Invariance is the bottleneck} (Section~\ref{sec:invariance}): we isolate, by ablation,
that cross-shape generalization is governed by the \emph{equivariance of the shape descriptor}, not
by capacity and not by the method. A noninvariant descriptor collapses under rotation; a network does
not rescue it. This connects geometric OOD in operator learning to the canonicalization approach
\citep{shumaylov2025lie,tahmasebi2025bounds,adaptivecanon2025}.
\item \textbf{Opening the exterior problem in 3D} (Section~\ref{sec:exterior}): we carry the split to
the exterior setting: flow past a body, drag and mobility. The operator foundation is a 3D completed
double layer made exact on the surface by quadrature by expansion, second-kind well-conditioned
\emph{and} $SO(3)$-equivariant by construction. Its drag matches the analytic Oberbeck solution. On top
of it we learn drag, many-body mobility, and the exterior field. Equivariance is exact by construction
(against $4\times10^{-2}$ for rotation augmentation) and transfers without canonicalization; Lorentz
reciprocity of the two-body resistance is recovered to machine precision; the out-of-distribution
fragility is confined to scalar invariants rather than to structure.
\end{enumerate}
The core results are on a controlled 2D interior Stokes testbed where an exact boundary-integral solve
provides ground truth (the dense exact solve \emph{is} the BEM reference), in double precision, on CPU;
Section~\ref{sec:exterior} then opens the exterior problem in three dimensions, with the analytic sphere
and ellipsoid drag as reference.

\begin{figure}[t]
\centering
\resizebox{\textwidth}{!}{%
\begin{tikzpicture}[font=\sffamily\scriptsize,
 exact/.style={draw=blue!45!black, line width=0.7pt, rounded corners=2pt, fill=white, align=center, inner sep=3pt},
 learn/.style={draw=orange!80!black, line width=0.7pt, rounded corners=2pt, fill=white, align=center, inner sep=3pt},
 io/.style={draw=black!55, line width=0.7pt, rounded corners=2pt, fill=white, align=center, inner sep=3pt},
 proc/.style={draw=black!60, line width=0.7pt, rounded corners=2pt, fill=black!3, align=center, inner sep=5pt, font=\sffamily\footnotesize},
 ar/.style={-{Stealth[length=5pt,width=4pt]}, line width=0.8pt, black!70}]
\node[io] (dom) at (0,0)
 {\includegraphics[width=18mm]{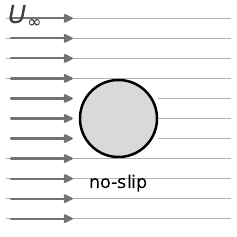}\\[1pt]body in stream $U_\infty$\\ no-slip $\partial B$};
\node[exact] (core) at (3.7,1.25)
 {\includegraphics[width=18mm]{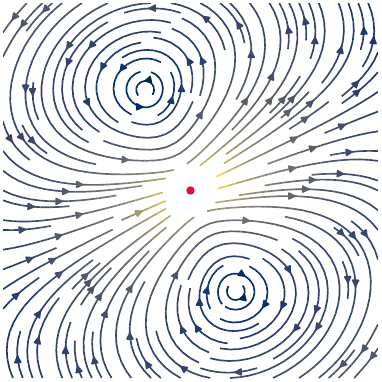}\\[1pt]free-space core\\ (Stokeslet) $\cdot$ \textbf{0} params};
\node[learn] (corr) at (3.7,-1.25)
 {boundary correction $K_\theta$\\ \textbf{learned}};
\node[font=\LARGE] (plus) at (6.35,0) {$\boldsymbol{\oplus}$};
\node[proc] (solve) at (8.0,0) {completed\\ second-kind\\ solve};
\node[io] (out) at (10.6,0)
 {\includegraphics[width=18mm]{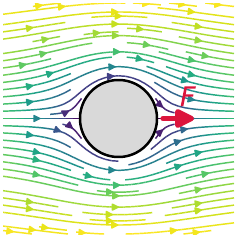}\\[1pt]exterior field $u\!\to\!U_\infty$\\ drag, mobility $\mathcal{J}$};
\draw[ar] (dom.east) -- (core.west);
\draw[ar] (dom.east) -- (corr.west);
\draw[ar] (core.east) -- (plus);
\draw[ar] (corr.east) -- (plus);
\draw[ar] (plus) -- (solve.west);
\draw[ar] (solve.east) -- (out.west);
\draw[ar, dashed, black!55] (out.south) .. controls (10.6,-2.7) and (3.7,-2.7) .. (corr.south)
 node[pos=0.5, below=1pt, black!60, font=\sffamily\scriptsize] {differentiable: gradients by adjoint, $O(N)$};
\begin{scope}[on background layer]
 \node[fill=blue!7, rounded corners=3pt, fit=(core), inner sep=4pt] (gco) {};
 \node[fill=orange!9, rounded corners=3pt, fit=(corr), inner sep=4pt] (gcr) {};
\end{scope}
\node[blue!45!black, font=\sffamily\scriptsize, above=2pt of gco] {fixed analytically};
\node[orange!75!black, font=\sffamily\scriptsize, below=2pt of gcr] {the only learned part};
\end{tikzpicture}}
\caption{The rigid-core / learned-boundary split, shown on the \emph{exterior} problem.
A rigid body $B$ held in a uniform stream $U_\infty$ with no-slip on $\partial B$ (\textbf{left}; a sphere
here, the construction is shape-general) drives a disturbance velocity that decomposes into an exact
\emph{free-space core}, the Stokeslet, $SO(3)$-equivariant and with no learnable parameters
(\textbf{top, blue}), plus a \emph{boundary correction} $K_\theta$ enforcing no-slip on $\partial B$
(\textbf{bottom, orange}); \emph{only this correction is learned}, as a well-conditioned second-kind
operator equivariant by construction. A completed second-kind solve combines them ($\boldsymbol{\oplus}$)
into the exterior field (\textbf{right}, tending to $U_\infty$ at infinity) and the quantities of interest, drag, torque, and mobility $\mathcal{J}$. The same split governs the controlled interior testbed where
an exact dense solve furnishes the ground truth for the experiments below. The whole pipeline is
differentiable, so shape gradients follow from a single adjoint solve at $O(N)$.}
\label{fig:pipeline}
\end{figure}

\section{Experimental summary and metrics}\label{sec:expsummary}
No single number captures where this method stands, so Table~\ref{tab:positioning} reports a profile rather than a ranking. The learned operator inherits the solver's guarantees (exactness, $O(N)$ scaling, exact $SO(3)$-equivariance, a differentiable adjoint) and trains on an order of magnitude fewer shapes than a black-box surrogate, while conceding the two axes it does not lead, out-of-distribution robustness and raw per-query speed. It is not Pareto-dominated by any single alternative. Only our column and the reported ratios are measured here; the competing entries are qualitative, set by what each method is rather than by a head-to-head run.

\begin{table}[t]
\centering
\caption{Positioning across metrics, \emph{not} a speed benchmark. Only the ML-FMM column and the ratios ($5$--$16\times$, $\sim\!1.5\times10^{-3}$, $\sim\!2.5\times$ in-distribution, $10^{-16}$ vs $4\times10^{-2}$) are measured on our testbed; competing columns are qualitative, set by the nature of each method. The method is not Pareto-dominated: it leads on guarantees, data efficiency, and exact equivariance, and concedes out-of-distribution robustness and raw per-query speed.}
\label{tab:positioning}
\resizebox{\textwidth}{!}{%
\begin{tabular}{lcccc}
\toprule
Axis & Exact BIE solver & DeepONet & Geometric NO & ML-FMM (ours) \\
\midrule
Tensor error & exact (ref.) & high & moderate & $\sim\!1.5\times10^{-3}$ \\
Data efficiency & n/a & $5$--$16\times$ vs ours & $>$ ours & $\mathcal{O}(10)$ shapes \\
$SO(3)$-equivariance & exact & none & approx.\ ($4\times10^{-2}$) & exact ($10^{-16}$) \\
Differentiable adjoint $O(N)$ & yes, re-solved each time & yes, no guar.\ & yes, no guar.\ & yes, exact; trained once for all shapes \\
Scaling & $O(N^2)$ or $O(N)$ & $O(N)$ & $O(N)$ on grid & $O(N)$ (FMM) \\
OOD robustness & exact everywhere & weak & variable & fragile (local kernel helps) \\
Per-query cost & slow ($\sim\!0.45$\,s) & fast & fast & fast ($\sim\!50\,\mu$s) \\
\bottomrule
\end{tabular}%
}
\end{table}

Three error metrics recur; we define them once here and use them throughout. \emph{Held-out no-slip} is
the relative $L_2$ residual of the no-slip condition on shapes not seen in training,
$\|u_{\mathrm{core}}+u_{\mathrm{corr}}\|_{L_2(\partial\Omega)}/\|u_{\mathrm{core}}\|_{L_2(\partial\Omega)}$
(dimensionless; $\sim\!10^{-15}$ for the exact solve, $\sim\!10^{-3}$ for a learned correction at the
envelope floor). \emph{Interior field $L_2$} is the relative error of the predicted interior velocity at
fixed query points on held-out shapes, $\|u_{\mathrm{pred}}-u_{\mathrm{ref}}\|_{L_2}/\|u_{\mathrm{ref}}\|_{L_2}$
(the head-to-head metric; floor $\sim\!10^{-3}$). \emph{Out-of-distribution error distribution} reports the
median, p90, and maximum of the interior-field error over shapes drawn beyond the training envelope, because
out-of-distribution failure is heavy-tailed and the tail, not the mean, is decisive. Table~\ref{tab:expsummary}
consolidates every experiment with its setting, metric, and headline result.

\begin{table}[t]\centering\small
\caption{Consolidated summary of all experiments: setting, metric, and headline result. Section references point
to the full protocol, figures, and tables.}\label{tab:expsummary}
\begin{tabular}{p{2.5cm}p{2.9cm}p{2.0cm}p{4.0cm}}\toprule
Experiment (\S) & Setting & Metric & Headline result \\\midrule
Conditioning (\S\ref{sec:negative}) & interior, one curve, 1st vs 2nd-kind learned & held-out no-slip & 2nd-kind no better for \emph{learning}: representability, not conditioning, is the floor \\
Generalization (\S\ref{sec:gen}) & bounded interior, broad shapes & held-out no-slip & interpolation holds; extrapolation does not \\
Invariance (\S\ref{sec:invariance}) & cross-family & held-out no-slip & invariance, not capacity, is the lever; $>\!10^5\times$ collapse under rotation without it \\
Coverage (Tab.~\ref{tab:coverage}) & continuous broad dist.\ & held-out no-slip & coverage breaks the cross-family plateau ($\approx\!3.6\times10^{-2}$) \\
Ablation (Tab.~\ref{tab:ablation}) & smooth family & held-out error & each ingredient necessary \\
Data efficiency (Tab.~\ref{tab:headtohead}) & shape$\to$field vs DeepONet & interior field $L_2$ & $5$--$16\times$ more data-efficient than the black box \\
Geometry-aware (\S\ref{sec:headtohead}) & vs GNO/GINO kernel & interior field $L_2$ & $\sim\!2.5\times$ in-distribution (crossover at low data); split more fragile OOD \\
OOD parameterization & amp.\ $0.40$, end-to-end & field error dist.\ & local equivariant kernel recovers the exact operator; dense global map develops a heavy tail \\
QBX (Fig.~\ref{fig:qbx}) & on-surface 3D & rel.\ error vs order & geometric convergence in expansion order \\
Exterior 3D (\S\ref{sec:exterior}) & sphere / ellipsoid drag & drag vs $6\pi\mu a$ / Oberbeck & $\sim\!10^{-3}$; $SO(3)$-equivariant to $10^{-14}$ \\
Learned resistance (\S\ref{sec:exterior}) & $R(S)$ on $\ell{=}2$ family & equivariance; held-out & $2\times10^{-16}$ equivariance; $\sim\!1.5\times10^{-3}$ held-out, $\sim\!3\times10^{-3}$ OOD \\
Two-body (\S\ref{sec:exterior}) & two spheres & reciprocity; drag & resistance symmetric to $6\times10^{-16}$; reflections $40\%\!\to\!1.3\%$ \\\bottomrule
\end{tabular}
\end{table}

\section{Background and related work}
\paragraph{Operator learning as kernel integration.}
\citet{li2020gkn,kovachki2023no} cast operator learning as stacked kernel integrals; \citet{li2020fno}
(FNO), DeepONet \citep{lu2021deeponet}, and the multipole graph neural operator \citep{li2020mgno}
differ in the kernel's structure. Approximation guarantees are given by \citet{kovachki2023no}. Our
view is complementary: rather than learning the whole kernel, we fix the part the PDE already
determines and learn only the remainder.

\paragraph{Learning Green's functions and learning on the boundary.}
A growing line learns Green's functions directly \citep{boulle2022green,neuralgreens2025} or formulates
operator learning through boundary integrals. Closest to us, \citet{fang2023boundary} learn
\emph{only on the boundary} for parametric PDEs in complex geometries; \citet{fieno2024} build a
Fredholm-integral neural operator for boundary-value problems. We share the
boundary-only and second-kind ingredients, but our emphasis is the \emph{rigidity of the free-space
core}, with nothing to learn, and a finding on what the second-kind buys for \emph{learning}.
Closest in spirit, \citet{han2026kernel} read geometric generalization through the same kernel-integral
lens and draw the same bridge to fast summation, but \emph{learn} the geometry-dependent kernel with a
multiscale, Ewald-inspired operator carrying accuracy guarantees. The contrast sharpens our thesis: they learn the kernel integral; we show the free-space kernel \emph{is} the exact
Stokeslet (rigid, zero parameters) and confine learning to the boundary remainder.

\paragraph{Equivariant operators and the role of boundaries.}
Equivariance is a standard inductive bias \citep{weiler20183d}; spherical/steerable operators enforce
it for efficiency \citep{bonev2023sfno,cliffordcond2025}. Recent work observes that strict equivariance is
\emph{insufficient} for physical systems precisely because boundaries break it, and proposes additive
splits that decouple a globally equivariant component from boundary perturbations
\citep{gsno2026}. Our construction is exactly such a split made exact: an equivariant free-space core
plus a nonequivariant, learned boundary correction. Empirically we contrast such priors against augmentation directly (Section~\ref{sec:exterior}): the exact
tensorial prior reaches machine-precision equivariance ($2\times10^{-16}$) where rotation augmentation of the
\emph{same} network plateaus at $4\times10^{-2}$, a $10^{14}$ gap, so the structural constraint, not added
capacity or data, is what buys exact invariance.

\paragraph{Geometric generalization and canonicalization.}
Geometry-aware operators encode shape via signed distance functions, point clouds, or attention over learned physical states
\citep{zhong2025pigano,ginot2025,wen2025gaot,wu2024transolver,flowbench2025}, generalizing across geometries with a single forward
pass. These methods learn the \emph{entire} solution operator through a learned geometry encoder; by contrast, our
split fixes the free-space core analytically (zero learnable parameters, exact equivariance) and confines
learning to the boundary correction. We study this choice in a controlled 2D setting rather than at the
industrial 3D scale where such surrogates now excel \citep{wen2025gaot}. A companion study (the Hybride
axis of this programme) reaches a convergent finding from the CFD side: a non-local/hybrid geometry
encoder generalizes better out of distribution than a purely local (co-located SDF) one, matching the
structural lesson here that what is \emph{learned} (the boundary correction), not what is \emph{fixed}
(the exact core), governs out-of-distribution behaviour. Whether such generalization \emph{extrapolates} is delicate: \citet{dulberg2020canonical} find
training-set diversity suffices for translation but rotation is far harder; canonicalization renders
operators equivariant under group actions \citep{shumaylov2025lie}, with generalization bounds
\citep{tahmasebi2025bounds}. Our invariance finding (Section~\ref{sec:invariance}) is an instance of
this principle inside boundary-integral operator learning.

\paragraph{Locality as the route to out-of-distribution geometric transfer.}
A concurrent and fast-growing line, largely post-dating our experiments, isolates \emph{locality}
as the mechanism for out-of-distribution geometric generalization. Meshfree exterior-calculus operators
built on an invariant local frame transfer from a single solution to unseen geometries, with a solution
error bound that splits into discretization and kernel approximation terms \emph{independent of the
problem geometry} \citep{meec2026}; local discrete-stencil operators generalize across shapes by
construction, mirroring classical finite-difference/element discretizations \citep{disol2026}; and, for
Stokes specifically, learned micro-solvers that depend only on the \emph{local wall geometry} come with a
guarantee that a bounded training loss yields a bounded macroscopic error \citep{hmm2025stokes}. We read
this convergence as independent corroboration of the mechanism we isolate in
Section~\ref{sec:invariance}, not as a competing claim. Our distinction is not locality \emph{per se}: we
do not learn the local operator at all. The free-space core is the \emph{exact} analytical Stokeslet/FMM
kernel (zero parameters, machine precision, $O(N)$) and only the boundary closure is learned (the unknown remainder closing the otherwise exact split, in the sense of learned closure models; \citealp{duraisamy2019}). Where
these methods replace the entire local solve with a network (and inherit its approximation floor), we
retain a certifiable numerical-analysis backbone, a bounded-domain Leray projector that matches the
analytic Green's function to the quadrature limit, and confine learning to the irreducible boundary
remainder.

\section{The rigid core and the learned boundary}\label{sec:split}

The split has two parts, one fixed and one learned, sketched in Figure~\ref{fig:pipeline}.
\paragraph{Free space: a single equivariant kernel.}
For Stokes flow the velocity due to a force is the Stokeslet $G$, and the Leray projector applied to a
field is a convolution against a kernel fixed by the PDE. In complex variables the 2D Stokeslet has a
compact $(z,\bar z)$ form whose multipole and local expansions are steerable: the three fast multipole
translations (M2M, M2L, L2L) are, for each multipole frequency, scalar multiples of a fixed
rotation-equivariant map. We express the far-field operators in this fixed steerable basis; the scalars are
\emph{determined analytically by the kernel} (e.g.\ binomial weights for the translations), not free
parameters, so the core carries \emph{no geometry- or data-dependent degrees of freedom}. The operators
are exactly representable in the basis and equivariant to machine precision by construction (residual
$\sim\!10^{-16}$). To make ``nothing to learn'' precise rather than rhetorical, we tried to \emph{fit}
the operators from field-reconstruction data: the loss is underdetermined and plateaus at
$\sim\!8\times10^{-3}$ without recovering the analytic operators (an interoperator gauge degeneracy).
Constraining the fit to match the operators directly (distillation) recovers them to
$4.7\times10^{-6}$ (two-stage) and $9.9\times10^{-6}$ (L-BFGS, no field data). The optimum simply
\emph{is} the analytic operator: the free-space core is \emph{rigid}, with zero learnable content. Any ``fitting'' there is at best a distillation of a closed form, never a source of new degrees of freedom.

\paragraph{Bounded domain: the boundary correction is the only learnable object.}
In a bounded geometry $\Om$, the Leray projector splits into the fixed free-space convolution and a boundary correction,
\begin{equation}\label{eq:split}
\Leray_\Om \;=\; \underbrace{\Leray_{\mathrm{free}}}_{\text{exact, $0$ parameters}}\;+\;\underbrace{\mathcal{C}_{\partial\Om}}_{\text{learned}}.
\end{equation}
The correction $\mathcal{C}_{\partial\Om}$ enforces the no-slip boundary condition and has no closed form; it is the natural target of learning and, we will
argue, the only appropriate one. We represent it as a boundary-integral operator on
the curve and learn its geometry-dependence.

\paragraph{Second-kind formulation.}
A first-kind (single-layer) representation is ill-conditioned (the operator is compact; its singular
values accumulate at zero). The classical remedy is a second-kind (double-layer) representation: the
boundary density $\mu$ solves
\begin{equation}\label{eq:secondkind}
\bigl(-\tfrac12 I + K_{\partial\Om}\bigr)\,\mu \;=\; g,
\end{equation}
identity plus compact, whose conditioning is bounded independently of the mesh.
On our wavy closed curve we measure, for the completed operator, $\mathrm{cond}\approx 2.85$ at all
resolutions, against $\approx\!4.3\times10^{16}$ for the first-kind. We learn only the compact part
$K_{\partial\Om}$ as a function of geometry; the $-\tfrac12 I$ in \eqref{eq:secondkind} is a fixed skip (a residual identity), so the network
never has to represent the dominant term.

\section{Conditioning is not the learning bottleneck}\label{sec:negative}
\paragraph{Hypothesis.} Since the second-kind operator is exponentially better conditioned, one
expects it to be \emph{easier to learn}, the learned operator should track the exact one more
faithfully and enforce no-slip to a lower floor. We test this directly.

\paragraph{Protocol.} On a fixed family of wavy closed curves parameterized by an amplitude $\epsilon$,
with a boundary datum \emph{realizable by construction} (the trace of an interior Stokeslet, hence in
the operator's range\footnote{A generic wall velocity lies outside the range and yields a spurious
$\sim\!5\times10^{-2}$ residual independent of resolution, a pitfall we flag explicitly.}), we learn
the $\epsilon$-parameterized boundary operator in both first and second kind, and measure held-out
no-slip on disjoint $\epsilon$.

\paragraph{Result (Table~\ref{tab:negative}).} At sufficient capacity, both kinds reach nearly the same
held-out boundary residual (a few parts in $10^{5}$); the second-kind gain is $0.9\times$, \emph{no advantage}, despite a $10^{16}\times$ better conditioning. At reduced capacity (the regime where a floor
appears), the gain is only $1.5\times$ and collapses under operator noise; combining the second-kind
with an output (no-slip) training metric yields a marginal $1.3\times$. \textbf{Conditioning is not the
bottleneck.} The floor is set by the \emph{representability} of the boundary density (how many degrees
of freedom), an object distinct from operator conditioning; improving conditioning does not attack it. This negative is \emph{deterministic} by construction (fixed shape families and closed-form least squares; identical across runs), not a seed-dependent estimate.

\begin{table}[h]\centering\small
\begin{tabular}{lccc}
\toprule
capacity & first-kind no-slip & second-kind no-slip & gain (1st/2nd)\\
\midrule
sufficient & $2.6\times10^{-5}$ & $2.9\times10^{-5}$ & $0.9\times$\\
reduced & $4.0\times10^{-3}$ & $2.6\times10^{-3}$ & $1.5\times$\\
\bottomrule
\end{tabular}
\caption{Held-out no-slip, learned first- vs second-kind boundary operator on the same curve. The
second-kind's bounded conditioning ($2.85$ vs $4.3\times10^{16}$) does not translate into better
learning.}
\label{tab:negative}
\end{table}

\paragraph{What noise and nonrealizability do.} The negative above is stated on clean, realizable data. We
now probe the two regimes one should worry about. The first-kind problem is a discrete ill-posed system: its
singular values decay to zero without a gap, so noise in the small singular directions is amplified, and the
textbook remedy is precisely the second-kind reformulation \citep{hansen1998}. Two findings result.

\emph{(i) Under operator noise on a realizable, smooth datum, the negative is robust.} The second-kind no-slip
floor simply tracks the injected noise level, uniformly across resolution, and the first-kind floor matches it
(gain $\approx\!1$). This holds even though the first-kind density norm is inflated about tenfold by the
conditioning.

\emph{(ii) Conditioning reenters once the datum stops being smoothly realizable.} As an interior source
approaches the boundary, its trace acquires fine structure that leaks into the ill-conditioned subspace. The
first-kind floor then rises above the noise level, by a resolution-dependent, high-variance factor (it depends
on how the perturbation aligns with the small singular directions), while the second-kind floor stays pinned
at the noise level. The second-kind thus regains a genuine advantage \emph{exactly} for data that probes the
small singular directions.

The finding is therefore regime-specific. Improving conditioning does not help when the
bottleneck is representability; it does help when the datum, through noise or nonrealizability, excites the
ill-conditioned subspace, as classical discrete ill-posed theory predicts. This also gives a train-time \emph{diagnostic} for which regime one is in, from quantities available before any expensive solve: compare a conditioning-limited estimate (operator conditioning $\times$ data-noise level) against the representability floor (the best-fit residual of the descriptor$\to$target regression on clean data), and take whichever dominates. The second-kind reformulation helps only when the former does.

\begin{table}[h]\centering\small
\caption{Train-time diagnostic: a conditioning-limited estimate (conditioning$\times$noise) against the clean
representability floor. Whichever dominates names the regime; the second-kind reformulation helps only when the
conditioning term does.}
\label{tab:diagnostic}
\begin{tabular}{lccc}\toprule
regime & conditioning$\times$noise & repr.\ floor & verdict \\\midrule
clean, realizable (2nd-kind) & $3.5\times10^{-12}$ & $3\times10^{-3}$ & representability floor \\
clean, realizable (1st-kind, cond$\sim$N) & $3.0\times10^{-10}$ & $3\times10^{-3}$ & representability floor \\
$1\%$ noise (2nd-kind) & $3.5\times10^{-2}$ & $3\times10^{-3}$ & conditioning-limited \\
$1\%$ noise (1st-kind, cond$\sim$N) & $3.0\times10^{0}$ & $3\times10^{-3}$ & conditioning-limited \\
\bottomrule
\end{tabular}
\end{table}

\section{Geometric generalization, validated and bounded}\label{sec:gen}
The test of the split is whether a single learned correction transfers to shapes it never saw, and how far.
\paragraph{Setup.} We move to multiharmonic closed curves $r(t)=a\,(1+\sum_k c_k\cos kt)$ parameterized
by an amplitude \emph{vector} $c$ (modes $\{2,3,4\}$), and learn the second-kind compact part as a
function of $c$ (polynomial feature map, closed-form least squares). The exact dense second-kind solve
is the reference (the BEM ground truth for this boundary-integral problem). We train on shapes with
$|c_k|\le 0.05$ and test on held-out shapes.

\paragraph{Interpolation generalizes; extrapolation does not.} On held-out shapes \emph{within} the
training envelope, the no-slip is of the same order as training (ratio $\approx\!1$, no overfitting),
and a geometry-\emph{independent} operator is $\sim\!10^{2}\times$ worse, so the learned operator has
captured the shape dependence. The \emph{interior} velocity field, not just the boundary, matches the exact reference to $\sim\!1.6\times10^{-4}$ relative $L_2$: the operator
reproduces the solution in the domain. \emph{Beyond} the envelope, with amplitudes larger than any seen in training, it degrades sharply, a relative increase of roughly $42$-fold in the boundary residual at
$2.4\times$ the training amplitude. The
honest claim is therefore \emph{generalization within the training envelope}, consistent with the
broader finding that operator-learning geometric generalization interpolates more readily than it
extrapolates \citep{dulberg2020canonical}.

\paragraph{Descriptor choice and resolution.} We separated the roles of descriptor \emph{type} (global
Fourier amplitudes vs.\ locally sampled point-radii) and \emph{dimensionality}, on both a smooth band-limited
family and a localized (von Mises bump) family, measuring held-out no-slip in interpolation and extrapolation.
Three findings follow, one counter to our initial guess.

\emph{(i) The floor is set by representability, not by the descriptor.} Once any descriptor resolves the
shape, Fourier and point-radii converge to the \emph{same} held-out floor (ratio $\approx\!1$ at matched dimension), consistent with the central thesis that capacity, not encoding, is the bottleneck.

\emph{(ii) Resolution helps interpolation, but never extrapolation.} It helps up to the resolving threshold
and then plateaus (a $\sim\!15$--$25\times$ drop from an under-resolved to a resolved descriptor). It never
closes the extrapolation gap, which stays above the interpolation floor at every dimension: resolution is not
a substitute for structure outside the training envelope.

\emph{(iii) Counter to intuition, a global descriptor is the more robust under-resolution.} One expects local
sampling to suit localized features; instead, a \emph{global} descriptor degrades gracefully while \emph{local}
point-radii can miss a localized bump entirely ($\sim\!6\times$ worse at low dimension, catastrophic in
extrapolation). Each global coefficient sees the whole contour, whereas a sparse local sample may straddle the
feature. For smooth, nonlocalized shapes the two are interchangeable even at coarse sampling.

\paragraph{General regular shapes, not just harmonic bumps.} The boundary-integral solve itself is
\emph{not} tied to a harmonic basis: discretized by collocation, it reaches machine precision on a range
of nonharmonic regular shapes, ellipses ($\mathrm{res}\sim\!4\times10^{-16}$), localized von Mises
bumps ($\sim\!5\times10^{-16}$), rounded polygons ($\sim\!4\times10^{-16}$), with bounded
conditioning ($2.6$--$9.8$). The learning descriptor likewise need not be harmonic: a point-based
descriptor (sampled boundary radii, no Fourier transform) generalizes identically to the harmonic
amplitudes. The harmonic parameterization is a convenience; it restricts to star-shaped domains, a limitation of the generator, not of the method.

\paragraph{Off the star-shaped parameterization.} To confirm the method is not tied to that star-shaped
generator, we build a smooth closed curve \emph{outside} the $r(t)$ family: a strip bent into a thick arc
(a horseshoe). We hand its nodes (positions, outward normal, tangent, curvature) directly to the same
operators, which assume nothing about star-shapedness. The shape is strongly nonconvex: $35\%$ of its
boundary is concave, qualitatively unlike the nearly convex training families.

Both structural pillars survive. The completed second-kind operator stays well-conditioned
($\mathrm{cond}\approx27$, $N$-independent), against a first-kind operator at $\sim\!10^{17}$. A manufactured
interior Stokeslet solve converges \emph{spectrally} under refinement
($2\times10^{-4}\!\to\!8\times10^{-6}\!\to\!4\times10^{-10}\!\to\!1\times10^{-11}$ as
$N=192\!\to\!256\!\to\!448\!\to\!512$), confirming the operators are correct off the parameterization.

This horseshoe is strongly nonconvex but still star-shaped, so some interior point sees the whole boundary.
Driving the geometry to \emph{strictly} non-star (a deep C with an empty visibility kernel, its two arms brought toward a near-slit) degrades the second-kind conditioning to $\sim\!10^{17}$, and it
\emph{does not improve with refinement} (a geometric, not a resolution, effect). The edge of the
well-conditioned regime is precisely the near-slit onset characterized next.

\paragraph{Piecewise-regular shapes.} Piecewise-smooth boundaries joined $C^1/C^2$, with no corner, are
handled with \emph{graded} accuracy. A $C^2$ periodic spline solves to $\sim\!10^{-5}$: the trapezoidal
quadrature loses spectral accuracy on a nonanalytic curve, but the operator stays well-conditioned
($\approx\!2.7$). Hardening toward a corner (a superellipse $|x|^p+|y|^p=1$ with growing $p$) degrades the
residual monotonically (from $10^{-16}$ for the circle to $5\times10^{-6}$ for a near-square), while
conditioning stays moderate.

A genuine corner ($C^0$) breaks both the spectral quadrature and the compactness of the double layer. We
characterize it with a panel-based second-kind solve (Laplace double layer, composite Gauss--Legendre; no
singular quadrature is needed, since the kernel vanishes on a straight panel). Two honest findings result.

\emph{Moderate corners are already robust.} A convex polygon solves the interior Dirichlet problem to
$\sim\!10^{-9}$ with uniform panels at conditioning $O(10)$, consistent with the superellipse trend above.

\emph{The genuine breakdown is the near-slit.} As the reentrant interior angle $\beta\!\to\!2\pi$, the
conditioning of $(-\tfrac12 I+K)$ blows up (from $\sim\!13$ at $\beta=1.2\pi$ to $\sim\!4\times10^{3}$ at
$\beta=1.95\pi$). Resolving the singular density there needs refinement toward the corner, but naive graded
refinement \emph{worsens} the conditioning (tiny panels). This is exactly what RCIP
\citep{helsing2008rcip,helsing2018tutorial} compresses, kernel-independently, into a coarse well-conditioned
system for transformed densities. We implemented RCIP on Helsing's model problem (Laplace adjoint double-layer, single corner) and reproduced the published functional $q_{\mathrm{ref}}=1.130016\ldots$ to $3.8\times10^{-9}$ \emph{with the conditioning of the compressed system bounded at $\approx\!1.7\times10^{3}$ for every refinement depth} (nsub $5$ to $100$). The brute-force graded system, by contrast, reaches that accuracy only transiently before its conditioning diverges past $10^{7}$ to overflow (Figure~\ref{fig:cornerrcip}A; Table~\ref{tab:rcip}).

The breakdown is governed by the \emph{local} corner angle alone, and the corner block is
\emph{learnable}: the compressed inverse $R(\theta)$
($64\times64$) interpolates across corner angles, and injecting it into the compressed solve at
\emph{genuinely held-out} angles (Chebyshev sampling, $K{=}16$) reproduces $q$ to $2.9\times10^{-5}$, improving with sampling density (Figure~\ref{fig:cornerrcip}B; Table~\ref{tab:rcip}). The corner block is thus a $\theta$-parameterized, geometry-local object, learned once and then applied across global shapes. This is exactly the theme of Section~\ref{sec:headtohead}, an operator trained once and applied to many shapes, realized at the corner. \textbf{Summary: regular and piecewise-regular shapes
are general and moderate corners are already robust; for the near-slit / many-corner regime, RCIP restores
bounded conditioning (validated to $\sim\!10^{-9}$ against a published reference) and its corner block is
a learnable $\theta$-parameterized object ($\sim\!10^{-5}$ to $\sim\!10^{-10}$ at held-out angles, improving with sampling), a concrete $\theta$-local extension, shown on the Laplace model problem, with an analytic extension to vector Stokes flow below.}

\paragraph{Vector extension: Stokes flow to the re-entrant corner.} The same construction lifts to the two-dimensional Stokes equations, the vector case with the strongest corner singularity. Under the \emph{completed double-layer} representation $\mathbf u=\mathcal D[\mathbf q]+\mathbf n(\mathbf x)\oint\mathbf q\cdot\mathbf n\,\mathrm ds$, the rank-one completion pins the interior pressure null space that the combined field $2\mathcal D+\eta\mathcal S$ leaves under-regularized (a single near-null mode aligned with $\mathbf n$), while the double layer alone is scale invariant at the corner, so the compressed inverse is again a fixed point, a pure function of $\theta$. The scalar analysis carries over directly: RCIP restores bounded conditioning where a graded direct solver of equal accuracy diverges, and $R(\theta)$ remains the same low-dimensional $\theta$-parameterized object, recovered by interpolation of the compressed inverse. The one structural difference is at the slit, where $R(\theta)$ inherits a branch point as $\theta\to2\pi$ (corner exponents colliding at $\lambda=\tfrac12$), an intrinsic bound on global polynomial interpolation in the deep re-entrant regime rather than a numerical artefact. We confirm this numerically at a right-angle corner: RCIP holds $\kappa\sim2.9\times10^7$ at fixed subdivision against a diverging graded solver, recovers a manufactured Stokeslet to $\sim\!2.6\times10^{-9}$, and interpolates $R(\theta)$ to $\sim\!6\times10^{-11}$ on the convex branch and $\sim\!4\times10^{-9}$ in the re-entrant regime at held-out angles, matching the scalar model.

\begin{figure}[t]\centering
\includegraphics[width=0.93\textwidth]{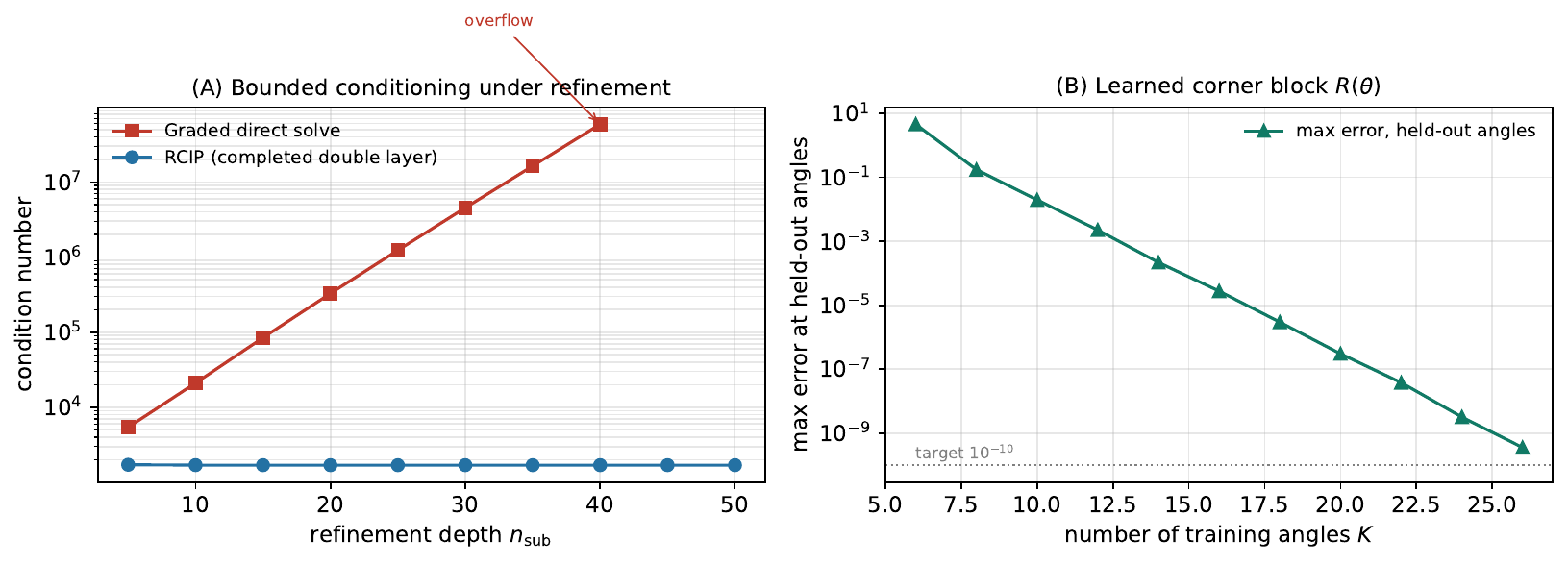}
\caption{Corner regime via RCIP on Helsing's model problem (Laplace adjoint double layer). \textbf{(A)} Under refinement toward the corner, the compressed condition number stays bounded ($\approx\!1.7\times10^{3}$) at every depth, while a graded direct solve of equal accuracy diverges past $10^{7}$ into overflow. \textbf{(B)} The learned corner block $R(\theta)$, reconstructed by barycentric interpolation of the exact compressed inverse on Chebyshev-sampled angles, converges spectrally in the number of training angles, reaching $2.8\times10^{-5}$ at $K{=}16$ and $3.6\times10^{-10}$ at $K{=}26$ on genuinely held-out angles.}
\label{fig:cornerrcip}
\end{figure}

\begin{table}[h]\centering\small
\caption{Corner regime, validated on Helsing's model problem (Laplace adjoint double layer, single corner,
published $q_{\mathrm{ref}}=1.1300163213\ldots$). RCIP attains the reference accuracy with \emph{bounded}
conditioning where the brute-force graded system diverges; the compressed corner block $R(\theta)$ is
learnable, reproducing $q$ at held-out angles to machine precision.}
\label{tab:rcip}
\begin{tabular}{lcc}\toprule
 & accuracy in $q$ & conditioning of the solve \\\midrule
brute-force graded (nsub $40$) & $8.2\times10^{-10}$ & $5.9\times10^{7}$, diverges (overflow by nsub $50$) \\
RCIP compressed (nsub $40$--$100$) & $3.8\times10^{-9}$ & $\approx 1.7\times10^{3}$, bounded at all depths \\
learned $R(\theta)$, genuinely held-out angles & $2.9\times10^{-5}$ (Chebyshev $K{=}16$) & n/a \\
\bottomrule
\end{tabular}
\end{table}

\section{Invariance, not capacity, governs cross-shape generalization}\label{sec:invariance}
The natural next step is a single model across \emph{diverse} shape families (ellipse, von Mises,
rounded polygon, spline). Here held-out error is poor ($\sim\!10^{-1}$), and a
small MLP with a richer descriptor does \emph{not} help (it is in fact worse). The instinct ``add
capacity'' fails. A per-family control reveals the cause: rounded polygons generalize to
$3.5\times10^{-7}$, von Mises to $6.4\times10^{-3}$, but \emph{ellipses fail} ($3.6\times10^{-1}$), the simplest family is the one that breaks.

\paragraph{Ablation isolates rotation.} Restricting ellipses to a \emph{fixed} orientation recovers
near-machine generalization ($5.4\times10^{-7}$); allowing \emph{free} rotation collapses it
($3.1\times10^{-1}$), a $>\!5\times10^{5}$ degradation from a single ablated factor
(Figure~\ref{fig:invariance}). The reason is structural: two rotated copies of a shape have the
\emph{same} operator (up to a frame rotation) but \emph{different} descriptors, because the descriptor
(radii/curvature sampled at fixed nodes) is not rotation-invariant. The learner is being asked to fit
an ill-defined function. \textbf{The failure is neither the method (ML-FMM), nor capacity, nor
interfamily diversity; it is the noninvariance of the descriptor: the equivariance problem.}

To make this precise, write $Q\!\cdot\!\partial\Om$ for the boundary rotated by $Q\in SO(d)$ and $\rho(Q)$ for the induced action on velocity fields. The exact boundary correction of \eqref{eq:split} is \emph{equivariant},
\begin{equation}\label{eq:equiv}
\mathcal{C}_{Q\cdot\partial\Om} \;=\; \rho(Q)\,\mathcal{C}_{\partial\Om}\,\rho(Q)^{-1}, \qquad Q\in SO(d).
\end{equation}
Two rotated copies of a shape thus share a single operator up to the frame $\rho(Q)$; a descriptor that is not $Q$-invariant assigns them \emph{different} inputs, so no learned map can satisfy \eqref{eq:equiv}, and transfer across orientations collapses. Restoring the invariance (so that the learned map can meet \eqref{eq:equiv}) is therefore the lever, not capacity.

\begin{figure}[h]\centering
\includegraphics[width=0.62\textwidth]{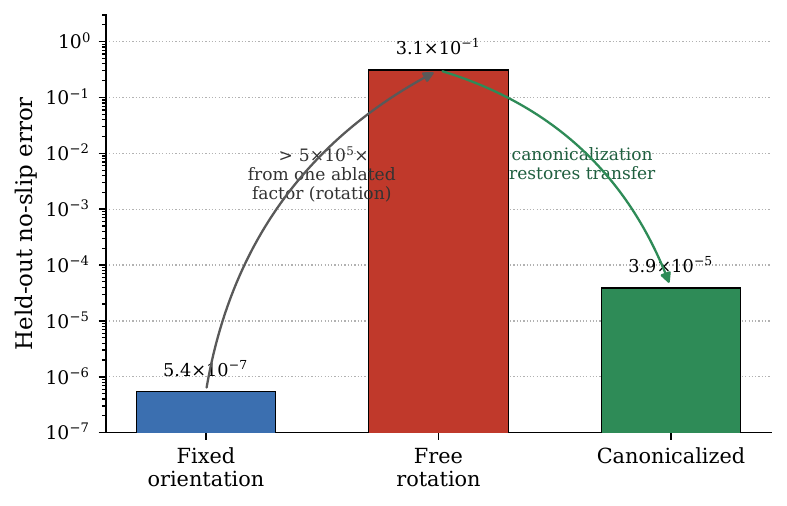}
\caption{The cross-shape generalization bottleneck is descriptor invariance, not capacity. Restricting
ellipses to a \emph{fixed} orientation gives a well-defined target ($5.4\times10^{-7}$); allowing
\emph{free} rotation collapses held-out error by $>\!5\times10^{5}\times$ from this single ablated
factor. Canonicalizing the descriptor (Section~\ref{sec:invariance}) restores near-machine transfer
($3.9\times10^{-5}$). Model capacity does not account for the collapse; invariance does.}
\label{fig:invariance}
\end{figure}

\paragraph{Canonicalization unlocks the invariance.} We then \emph{fix} the descriptor by
canonicalization \citep{shumaylov2025lie,adaptivecanon2025}: a complete canonicalization that combines
(i) an intrinsic frame from a weighted principal axis (inertia) analysis of the boundary, with sign
disambiguation by the farthest anchor point \citep{pcari2019,rethinkri2023}, and (ii) a polar
reparameterization so the first node always lands on the same physical point. \emph{Both are needed}:
the frame alone does not suffice (the starting-point ambiguity remains), giving only $1.7\times10^{-1}$,
whereas the complete canonicalization drives the free-rotation ellipse from $3.6\times10^{-1}$ to
$3.9\times10^{-5}$, the fixed-orientation level. This mirrors the canonicalize--operate--restore
pipeline recently proposed for Fourier neural operators \citep{pacefno2026}, and confirms its premise:
under arbitrary pose the model would otherwise have to learn coordinate alignment \emph{and} the
operator in a single map.

\paragraph{The cross-family plateau is an artifact of discrete sampling.} Cross-shape generalization of
a learned boundary operator \emph{is} attainable once the descriptor is invariant. Within a single
family, canonicalized held-out generalization is excellent (ellipse $3\times10^{-5}$, von Mises
$4\times10^{-6}$, rounded polygon $3\times10^{-2}$, spline $1\times10^{-2}$). Across the four
\emph{discrete} families, a single model appears to plateau near $3$--$4\times10^{-2}$. We show this
plateau is largely an artifact of training on disjoint families with gaps between them. Replacing the
four discrete families by a single \emph{continuous} broad distribution (random multimode Fourier-radial
star curves, modes $2$--$7$, random phases and orientation, which subsumes ellipse-like, bumpy and wavy
shapes) turns held-out shapes into \emph{interpolation}. A data-scaling study then breaks the
plateau, \emph{even with a linear-in-features model} (Table~\ref{tab:coverage}): held-out no-slip
falls from $4.4\times10^{-2}$ at $120$ shapes to $4.85\times10^{-3}$ at $850$ shapes, a factor $7.4$
below the discrete-family plateau, with a decisive jump between $120$ and $250$ and subexponential
diminishing returns thereafter, consistent with coverage-driven compositional generalization
\citep{redhardt2025scaling,zhang2025domain}. On this coverage, a residual MLP on top of the linear model
adds nothing ($6.0$ vs $5.85\times10^{-3}$): \emph{coverage, not network expressivity, is the binding
lever} once the distribution is continuous and dense. The honest picture is therefore strong
generalization to unseen regular shapes ($\sim\!5\times10^{-3}$) given an invariant descriptor and a
densely sampled continuous distribution; the earlier ``plateau'' reflected discrete sampling, not a
fundamental limit. This is a concrete, physically grounded instance of the canonicalization program
\citep{shumaylov2025lie,tahmasebi2025bounds}, and it \emph{strengthens} the case for
equivariance-by-construction rather than weakening it.

\begin{table}[t]\centering\small
\caption{Coverage breaks the cross-family plateau. Held-out no-slip on a \emph{continuous} broad shape
distribution, linear-in-features model with the Wielandt-completed solve; the plateau on four discrete families is $\approx 3.6\times10^{-2}$.}\label{tab:coverage}
\begin{tabular}{lccccc}\toprule
training shapes & 120 & 250 & 500 & 850 \\\midrule
held-out no-slip & $4.4\times10^{-2}$ & $6.6\times10^{-3}$ & $5.4\times10^{-3}$ & $4.85\times10^{-3}$ \\
factor below plateau & $0.8\times$ & $5.5\times$ & $6.7\times$ & $7.4\times$ \\\bottomrule
\end{tabular}
\end{table}

\begin{table}[t]\centering\small
\caption{Ablation: each ingredient is necessary. Held-out error with the ingredient removed, on smooth
shapes; canonicalization and coverage are measured here, completion is established formally in
Proposition~\ref{prop:compl}.}\label{tab:ablation}
\begin{tabular}{lccc}\toprule
ablated ingredient & full method & ingredient removed & degradation \\\midrule
canonicalization (reparameterization) & $7.4\times10^{-6}$ & $2.6\times10^{-2}$ & $\sim\!3500\times$ \\
continuous coverage (vs.\ discrete families) & $1.6\times10^{-3}$ & $8.3\times10^{-1}$ & $\sim\!500\times$ \\
completion (Wielandt) & \multicolumn{3}{c}{necessary and sufficient, Proposition~\ref{prop:compl}} \\\bottomrule
\end{tabular}
\end{table}

Removing either ingredient on the learned side collapses accuracy (Table~\ref{tab:ablation}): without the
canonical reparameterization the ordered descriptor is misaligned across shapes and the linear map cannot
generalize ($\sim\!3500\times$ worse); training on four discrete families instead of the continuous
distribution leaves the test space uncovered ($\sim\!500\times$ worse). The third ingredient, completion,
is not a redundant number here but the formal statement of Proposition~\ref{prop:compl}: on the exact
operator with a compatible right-hand side ($\int b\cdot\hat n\,dS=0$) the uncompleted system is already
consistent, so the blow-up appears only under a \emph{structured} error, exactly the regime the
proposition characterizes.

\paragraph{Robustness of the canonical frame near symmetry.} The PCA-style canonical frame is the one
place this pipeline is fragile: for \emph{near-symmetric} shapes (near-circles, equal-axis ellipses) the
second-moment tensor is near-degenerate, its principal axes are ill-defined, and the frame can flip
under an arbitrarily small perturbation. We therefore do not rely on canonicalization as the only route
to invariance. An alternative boundary operator that is \emph{equivariant by construction}, a learned
radial profile times a frozen isotropic tensor structure, needs no frame at all: it reproduces the
exact second-kind operator to machine precision ($1.4\times10^{-16}$), transfers a kernel fit on one
shape to a geometrically different shape to $1.3\times10^{-16}$, and stays well-conditioned
($\mathrm{cond}\approx3.5$). A frozen (nonsteerable) matrix, by contrast, breaks the equivariance ($0.15$). The
equivariant operator is the recommended safeguard near the symmetric limit. We quantify the failure: sweeping the ellipse
aspect ratio, the canonical frame degrades exactly as the second-moment eigenvalue gap closes. At the
circle (gap $\approx 0$) the recovered orientation is effectively random: a major/minor axis flip
rate of $\approx 0.5$ and an angular standard deviation of $\approx 66^\circ$ (five-seed $0.56\!\pm\!0.02$ and $62^\circ\!\pm\!3^\circ$). Any anisotropy
makes the noiseless frame exact (zero flips, sub-degree scatter for aspect ratio $\geq 1.1$). Under
$1\%$ point noise the residual jitter scales as $(\text{noise}/\text{gap})$: already $\approx 2.6^\circ$
at aspect ratio $1.01$, shrinking monotonically with elongation. On a threefold shape whose second
moment is isotropic (gap $\approx 0$, the worst case) the PCA frame collapses (scatter $\approx
44^\circ$), while the equivariant-by-construction operator holds its equivariance to $5.8\times10^{-16}$ ($5.5\times10^{-16}\!\pm\!5\times10^{-17}$ over five seeds)
independently of the gap, confirming it as the robust choice precisely where canonicalization fails. This also yields a concrete \emph{switch} criterion: the second-moment eigenvalue gap is itself the trigger; one canonicalizes while it is comfortably open and falls back to the equivariant operator once it closes below a small threshold (we use a gap ratio $\approx\!1.15$). The rule stays correct under $2\%$ descriptor noise.
This operator, a frozen basis with learned scalars, is precisely the representation-based route to equivariance
\citep{grepsnet2025}: a fixed steerable tensor structure carries the symmetry and only scalar
coefficients are learned. We apply it to the boundary closure alone, on top of a parameter-free exact
core, rather than to the entire solution map; there it departs from such general equivariant
networks.

\begin{table}[h]\centering\small
\caption{Automatic canonicalization$\leftrightarrow$equivariant switch. The second-moment eigenvalue-gap ratio
is the trigger (threshold $1.15$); the rule selects correctly and is stable under $2\%$ descriptor noise.}
\label{tab:switch}
\begin{tabular}{lcccc}\toprule
shape & gap ratio (clean) & mode & gap ratio ($+2\%$ noise) & mode \\\midrule
circle (symmetric) & $1.00$ & equivariant & $1.01$ & equivariant \\
ellipse $1.05$ (near-sym.) & $1.10$ & equivariant & $1.10$ & equivariant \\
ellipse $1.5$ (elongated) & $2.25$ & canonicalize & $2.24$ & canonicalize \\
\bottomrule
\end{tabular}
\end{table}

\paragraph{The safeguard extends to three dimensions, end to end.} The same near-symmetry
argument holds in $SO(3)$, and we confirm it on a real geometry rather than in the abstract. The steerable $SO(3)$ boundary correction is again the equivariant-by-construction route, now with the $3$D double-layer tensor structure carrying the rotational symmetry. We place it in the three-dimensional Stokes double-layer field on an ellipsoid surface, then sweep the triaxial-to-axisymmetric limit (two second-moment eigenvalues driven to coincide). The $3$D
PCA frame collapses exactly as in $2$D: at the axisymmetric spheroid (eigenvalue gap $\to 0$) its
rotation-equivariance error is $\mathcal{O}(1)$ ($\approx 0.6$, about $1300\times$ the
well-separated triaxial value), its axes arbitrary in the degenerate plane. The steerable
correction, needing no frame, stays equivariant to $\approx 10^{-14}$ \emph{independently of
the gap}, the spheroid included (Figure~\ref{fig:framefix3d}). This forward double-layer operator
sits on a completed second-kind solve that is itself validated against analytic resistance: the
uniform-flow drag of a sphere is recovered to relative error $\sim\!10^{-3}$ ($|F|/6\pi\mu a = 0.999$) and that of a
triaxial ellipsoid matches the Oberbeck formula along all three principal axes (relative error
a few $\times 10^{-3}$, decreasing with resolution). We leave to future work what lies beyond this proof of concept: a \emph{learned} out-of-distribution correction inside the full $3$D inverse solve.

\begin{figure}[t]\centering
\includegraphics[width=0.6\textwidth]{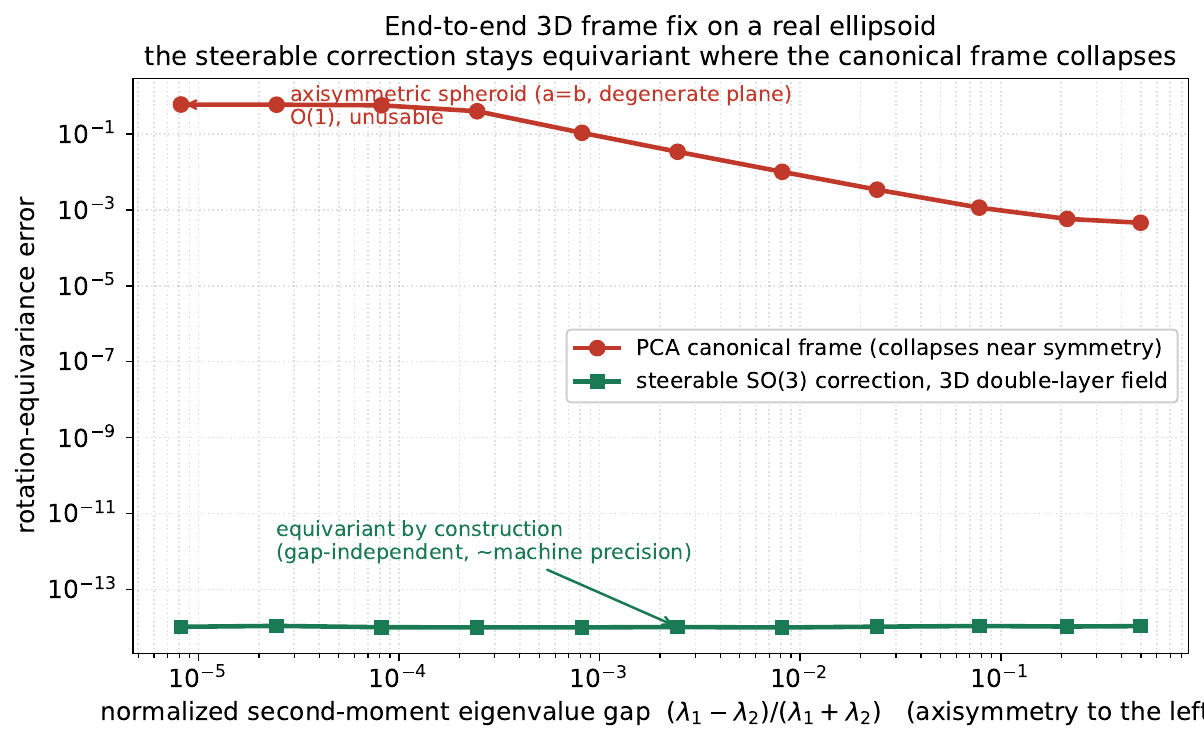}
\caption{The near-symmetry safeguard in $3$D, end to end on a real ellipsoid. As the ellipsoid
approaches axisymmetry (second-moment eigenvalue gap $\to 0$, left), the PCA canonical frame's
rotation-equivariance error rises to $\mathcal{O}(1)$, because the degenerate plane makes its principal axes arbitrary. The steerable $SO(3)$ boundary correction, wired into the $3$D Stokes double-layer field, stays equivariant to $\approx 10^{-14}$ independently of the gap. This is
the forward double-layer operator on real $3$D geometry; the completed solve it sits on reproduces
the analytic drag of spheres ($|F|/6\pi\mu a = 0.999$, relative error $\sim\!10^{-3}$) and ellipsoids (Oberbeck).}
\label{fig:framefix3d}
\end{figure}

\paragraph{Locality, not just invariance, is the out-of-envelope lever.} The transfer just noted
is not incidental, and it is about
\emph{parameterization}. The global route, which regresses the operator from a shape descriptor, interpolates
in descriptor space and therefore fails \emph{out of the training envelope}: pushing the boundary amplitude
beyond the training range, the descriptor$\to$operator map degrades by $\sim\!60\times$ (held-out interior
field error $\sim\!10^{-1}$). The local route (parameterize the operator as a kernel of \emph{relative}
geometry: the equivariant radial profile above) instead generalizes to machine precision on the
\emph{same} out-of-envelope shapes ($\sim\!10^{-10}$), because a boundary-integral operator \emph{is} a
local function of relative geometry and local boundary physics is universal. This is not an artefact of
hard-coding the analytic kernel: fitting the radial profile to in-distribution operators \emph{from data
alone} recovers the universal coefficient ($1/\pi$ to $10^{-11}$) and extrapolates identically, while the
global map keeps failing. The lesson refines this section's thesis: invariance governs cross-\emph{family}
generalization, but \emph{locality} governs out-of-\emph{envelope} extrapolation, and both point to the
same structural object, a local equivariant kernel rather than a global descriptor map. The residual
frontier is a local correction for effects \emph{outside} the analytic kernel (near-field/QBX (quadrature by expansion) singular
behaviour, or operators with no closed-form kernel), which would need an independent high-accuracy
reference to learn and test. In three dimensions we close the first of these. The singular self-interaction is handled by quadrature by expansion, and the radial profile is fit from the operator's action alone. It recovers the universal stresslet coefficient to $2\times10^{-16}$ and the exact resistance: $6\pi\mu a$ for a sphere, and the Oberbeck values for a triaxial ellipsoid (relative error $\sim\!10^{-3}$, converging). The fit stays $SO(3)$-equivariant to machine precision, so the singular near-field is not in itself an obstacle to the local fit. Only an operator with \emph{no} closed-form kernel remains genuinely open.

\section{Learning a rank-deficient second-kind operator requires completion}\label{sec:completion}
We isolate a methodological pitfall that, to our knowledge, the operator-learning literature on
second-kind boundary integral operators \citep{fieno2024,fredino2026,nature_integral2024} does not make
explicit, and which is easy to mistake for fundamental sensitivity.

\paragraph{The pitfall.} The interior-Dirichlet double-layer operator $-\tfrac12 I + K_{\partial\Om}$ of \eqref{eq:secondkind} has a known
\emph{rank-one nullspace} (the normal field; physically, the interior pressure gauge)
\citep{afklinteberg2019,stokeseig2019}. The classical remedy is a nullspace correction, Wielandt
deflation, a rank-one completion \citep{pozrikidis1992},
\begin{equation}\label{eq:completion}
A_{\mathrm c} \;=\; \bigl(-\tfrac12 I + K_{\partial\Om}\bigr) \;+\; s\,\hat n\,\hat n^{\!\top},
\end{equation}
which makes the system uniquely invertible. When the operator is \emph{exact}, the boundary datum lies
in the well-conditioned subspace and a na\"ive solve of the uncompleted system still succeeds. But when
$K_{\partial\Om}$ is \emph{predicted} (by any learned model), a small error rotates the datum into the near-null
direction, and the uncompleted solve explodes.

\paragraph{Diagnosis and fix.} A $k$-nearest-neighbor probe in descriptor space (predict $K$ as the
operator of the nearest training shape) makes this stark: solving the \emph{uncompleted} system yields a
no-slip residual of $5.5\times10^{6}$, while the \emph{exact} operator yields $5\times10^{-15}$. The
$21$-order gap is not physical sensitivity; it is the rank-one nullspace. Applying the Wielandt
completion in the solve collapses the same kNN residual to $6.3\times10^{-2}$, a reduction of
$8.7\times10^{7}$ (five-seed: uncompleted $1.6\times10^{7}\!\pm\!2.1\times10^{7}$ to completed $5.7\times10^{-2}\!\pm\!2.0\times10^{-2}$). With the completion, the linear baseline becomes robust ($\sim\!3.6\times10^{-2}$
cross-family, versus erratic $7\times10^{-2}$ to $5\times10^{-1}$ without it) and the per-family floor is
recovered.

\paragraph{Lesson.} Learning the compact part of a second-kind operator and then solving with it
requires handling the operator's nullspace exactly as the classical method does; otherwise an output
(solution) metric is hypersensitive in a way that is purely an artifact of an ill-posed solve. This is a
prerequisite, distinct from the spectral-conditioning question of Section~\ref{sec:negative}: completion
fixes \emph{well-posedness of the solve}, not the \emph{learnability of the floor}. We state this
formally.

\begin{proposition}[Completion is necessary and sufficient for a stable learned solve]\label{prop:compl}
Let $A=-\tfrac12 I + K$ be the second-kind boundary operator (on-surface double layer with calibrated
prefactor) of the interior Dirichlet Stokes problem on $\partial\Omega$. Then: \emph{(i)} $A$ has a
one-dimensional nullspace; \emph{(ii)} $\mathcal{N}(A^{\!\top})=\operatorname{span}\{\hat n\}$, the
normal field \citep{stokeseig2019}; \emph{(iii)} the rank-one Wielandt deflation
$A_s = A + s\,\hat n\hat n^{\!\top}$ ($s\neq0$) is invertible with condition number bounded independently
of the mesh; and \emph{(iv)} for a perturbed operator $\hat A = A+E$, a learned-operator error, the
uncompleted solve satisfies $\lVert q\rVert\sim\lVert b\rVert/\sigma_{\min}(\hat A)$ and blows up as
$\hat A$ approaches the singular $A$, whereas the completed solve satisfies
$\lVert q\rVert\le\lVert b\rVert/\sigma_{\min}(A_s)=O(1)$, so the no-slip residual is controlled by
$\lVert E\rVert$ rather than amplified.
\end{proposition}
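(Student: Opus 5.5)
Parts (i)--(ii) rest on classical Fredholm theory for the interior Stokes double layer; parts (iii)--(iv) are then linear algebra on that structure. For (ii), I compute $A^{\top}\hat n = (-\tfrac12 I + K^{\top})\hat n$ directly. The adjoint double layer applied to $\hat n$ is the normal trace of the traction of the single-layer potential with density $\hat n$. That potential is a pure pressure field: velocity zero in $\Omega$, pressure constant. Its interior traction jump therefore gives $K^{\top}\hat n=\tfrac12\hat n$, i.e.\ $A^{\top}\hat n=0$. Equivalently, integrate the kernel identity $\int_{\partial\Omega} T_{ijk}\,n_k\,dS = \pm\tfrac12\delta_{ij}$ (Gauss' lemma for the stresslet, with the calibrated prefactor) against $\hat n$ and use the divergence theorem.

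For (i) and the reverse inclusion in (ii), I invoke the Fredholm alternative: $K$ is compact on a smooth curve, so $\dim\mathcal N(A)=\dim\mathcal N(A^{\top})$. It then suffices to show $\dim\mathcal N(A^{\top})\le 1$. I would argue by uniqueness. If $A^{\top}\psi=0$, the single layer $u=\mathcal S\psi$ has zero interior traction. Energy $\int_\Omega 2\mu|e(u)|^2=0$ then forces $u$ to be a rigid motion in $\Omega$ and $p$ constant. Continuity of $\mathcal S\psi$ across $\partial\Omega$, together with exterior uniqueness (decay at infinity in 3D; in 2D, zero net force), forces the exterior field to be the trivial one. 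The density is then the jump, a multiple of $\hat n$. I would cite \citet{stokeseig2019} for the details of this uniqueness step rather than redo it, and note that in 2D the log-growth case needs the standard zero-force normalization.

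For (iii), let $v$ span $\mathcal N(A)$ (non-constructively, by (i)). If $A_s q=0$, take the inner product with $\hat n$. Since $\hat n^{\top}A=0$, this gives $s\|\hat n\|^2(\hat n^{\top}q)=0$, so $\hat n^{\top}q=0$ and hence $Aq=0$, i.e.\ $q=cv$. Injectivity then requires $\hat n^{\top}v\neq0$. This is the Wielandt nondegeneracy condition, and I expect it to be the main obstacle. My first attempt is to show that $v$ is not orthogonal to the left null vector, which holds iff $0$ is a semisimple eigenvalue of $A$. I would prove this from the fact that $\mathrm{Range}(A)=\hat n^{\perp}$ consists exactly of the compatible data $\int b\cdot\hat n=0$. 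If $v\in\hat n^{\perp}$, then $v=Aw$ would be solvable with $A^2w=0$. I would rule that out by a second energy argument: $w$ would generate a double-layer field whose boundary trace is the nullspace density, which forces a constant pressure mode and contradicts $v\neq0$. Once $A_s$ is injective, it is of the form identity plus compact (the rank-one term is compact), so it is invertible. Mesh-independence of the condition number follows because the Nyström discretization of an identity-plus-compact operator converges in collectively-compact (Anselone) fashion, giving uniformly bounded inverses for $N$ large. This matches the measured $\mathrm{cond}\approx2.85$.

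For (iv), write $\hat A = A+E$. By Weyl's inequality, $\sigma_{\min}(\hat A)\le\sigma_{\min}(A)+\|E\|=\|E\|$. Any component of $b$ along the corresponding left singular vector, which is generically nonzero once $E$ rotates the range, is amplified by $1/\sigma_{\min}(\hat A)\gtrsim1/\|E\|$, so the uncompleted solve blows up as $E\to0$. For the completed solve $\hat A_s=A_s+E$, we have $\sigma_{\min}(\hat A_s)\ge\sigma_{\min}(A_s)-\|E\|\ge\tfrac12\sigma_{\min}(A_s)$ for $\|E\|$ small. Hence $\|q\|\le 2\|b\|/\sigma_{\min}(A_s)=O(1)$. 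The true no-slip residual is $A_s q-b=-Eq$, bounded by $\|E\|\,\|q\|=O(\|E\|\,\|b\|)$. This is the claimed control by $\|E\|$ rather than amplification.
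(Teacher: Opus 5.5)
Your outline aims higher than the paper's proof, which only cites the classical rank deficiency and Wielandt deflation for (i)--(iii) and calls (iv) a standard perturbation bound. Its skeleton is right: Fredholm index zero, $A^{\top}\hat n=0$, a uniqueness bound, the criterion $\hat n^{\top}v\neq0$, Anselone for mesh independence, Weyl for (iv). But the two Stokes-specific steps fail as written.

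\textbf{The uniqueness step is on the wrong side of $\partial\Omega$.} Your (ii) computation shows that $\mathcal S[\hat n]$ has zero velocity everywhere, zero pressure outside, and a nonzero constant pressure inside. So its \emph{interior} traction is a nonzero multiple of $\hat n$, yet $A^{\top}\hat n=0$. Hence $A^{\top}\psi=0$ means the \emph{exterior} traction of $\mathcal S\psi$ vanishes. Zero interior traction instead characterizes $\mathcal N(\tfrac12 I+K^{\top})$, the left null space of the exterior operator $\tfrac12 I+K$ that annihilates rigid motions; it has dimension $3$ in 2D and $6$ in 3D. Your chain breaks accordingly at three points:
\begin{itemize}
\item a rigid interior motion with zero traction forces $p=0$, not an arbitrary constant;
\item the exterior field with that rigid trace is a nontrivial resistance flow in general, so exterior uniqueness does not remove it;
\item if both sides were trivial, the jump would give $\psi=0$, not a multiple of $\hat n$.
\end{itemize}
Run the energy argument outside instead. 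In 2D, $\int\psi\,dS=\int f^{-}dS=\int_\Omega\nabla\cdot\sigma=0$, which removes the logarithm automatically. Zero exterior traction plus decay then give $\mathcal S\psi\equiv0$ in the exterior. Continuity of the single-layer velocity and interior Dirichlet uniqueness give $u\equiv0$ and $p\equiv c$ in $\Omega$. The traction jump then makes $\psi$ a multiple of $\hat n$.

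\textbf{The Wielandt condition in (iii) is not established.} You correctly reduce injectivity of $A_s$ to $\hat n^{\top}v\neq0$, and you rightly keep $v$ non-constructive (off the circle it is not $\hat n$). However, the Jordan-chain sketch yields no contradiction. The interior field of $\mathcal D[w]$ with trace $v$ is a generic interior Stokes flow, and nothing forces a constant-pressure mode on it; $w$ merely witnesses $\int v\cdot\hat n\,dS=0$.

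The missing idea is an energy identity for $\mathcal D[v]$ itself. It vanishes in $\Omega$ with constant pressure $c$. Since the double-layer traction is continuous across $\partial\Omega$, its exterior traction is $-c\hat n$, while its exterior trace is $\pm v$ by the jump relation. Therefore $2\mu\int_{\mathbb R^d\setminus\bar\Omega}|e(\mathcal D v)|^2=\pm c\int_{\partial\Omega}v\cdot\hat n\,dS$. If $\hat n^{\top}v=0$, the exterior field is rigid and decaying, hence zero, and then $v=0$.

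\textbf{Two smaller corrections in (iv).}
\begin{itemize}
\item \emph{Blow-up needs a structured error.} For the paper's compatible data ($\int b\cdot\hat n\,dS=0$), the component of $b$ along the perturbed left singular vector is itself $O(\lVert E\rVert)$. So $1/\sigma_{\min}(\hat A)\gtrsim1/\lVert E\rVert$ does not give blow-up. Indeed, for generic $E=\epsilon E_0$ with $\hat n^{\top}E_0v\neq0$, $\lVert q\rVert$ stays bounded as $\epsilon\to0$. Blow-up needs $\sigma_{\min}(\hat A)\ll\lVert E\rVert$, which is the structured error the paper invokes: a kNN-predicted operator is itself nearly exactly singular.
\item \emph{Residual identity.} The physical no-slip residual is $Aq-b$, not $A_sq-b$. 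Projecting $\hat A_sq=b$ onto $\hat n$ and using $\hat n^{\top}b=0$ gives $Aq-b=-(I-P_{\hat n})Eq$, so your $O(\lVert E\rVert\,\lVert b\rVert)$ bound survives.
\end{itemize}
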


\emph{Proof sketch and verification.} Items (i)--(iii) are the classical rank deficiency of the
double-layer and its Wielandt deflation \citep{afklinteberg2019,pozrikidis1992}; the completion adds back
the missing rank-one range/null directions and $W[\mu]$ vanishes whenever $\int_{\partial\Omega}b\cdot\hat
n\,dS=0$ \citep[Lemma 4]{afklinteberg2019}. Item (iv) is the standard perturbation bound for a solve
against a near-singular versus a well-conditioned matrix. Numerically (continuous-distribution shapes):
$\sigma_{\min}(A)\approx 4\times10^{-11}$ (rank-one nullspace), the left null vector aligns with $\hat n$
to cosine $0.995$ (the residual is discretization; the right null vector aligns to $0.988$, so we claim
$\hat n$ for $\mathcal{N}(A^{\!\top})$ only, per (ii)), and $A_s$ has condition number $\approx 3.6$. The
empirical $5.5\times10^{6}\!\to\!6.3\times10^{-2}$ collapse above is exactly the gap between (iv)'s two
regimes under a \emph{structured} (learned) error.

\paragraph{From a rank-one gauge to multiply connected completion.} The rank-one deflation above is the
\emph{interior, simply connected} case, where the only deficiency is a scalar pressure gauge. For
exterior or \emph{multiply connected} domains the nullspace is larger and physical: a double-layer
potential cannot exert a net force or torque on an enclosed body, so its nullspace has dimension three
per body in 2D ($2$ translations $+\,1$ rotation) and six in 3D ($3+3$), with range of matching
codimension \citep{power1987,power1993}. The correct remedy is then the classical \emph{completed
double-layer} of Power and Miranda, an interior Stokeslet and rotlet per body, carrying the net force
and torque, rather than a rank-one term \citep{power1987,power1993,kohr2005}. We verify this on a
bounded multiply connected domain (a cavity enclosing an obstacle): the \emph{bare} double layer cannot
represent a flow exerting a net force on the obstacle (field error $0.21$), whereas the Power--Miranda
completion reproduces the field to $6.8\times10^{-10}$, recovers the exact net force, and yields a
fully completed operator with $\mathrm{cond}\approx 83$ \emph{independent of the mesh}. We confirm the 3D
extension at the structural level with a sphere proof-of-concept: the Stokeslet core is $SO(3)$-equivariant
to machine precision ($\|G(Qx,Qy)-Q\,G\,Q^\top\|\sim\!10^{-16}$), and the 3D double layer annihilates
\emph{exactly} the six-dimensional rigid-body subspace ($3$ translations $+\,3$ rotations:
$\|D\,r\|/\|D\,v\|\sim\!10^{-16}$ for rigid $r$ versus a generic field $v$), with a rank-six
Power--Miranda deflation lifting the nullspace to $O(1)$. The first of the two production ingredients is now in place at the
scaffold level: the 2D recipe transfers exactly to $SO(3)$. Writing the 3D double-layer kernel as a radial
profile $g(r)$ times the fixed isotropic angular structure $(\hat r\cdot n)\,\hat r_k\hat r_l$ and learning
only $g_\theta(r)=\sum_b c_b\phi_b(r)$ yields an operator that is $SO(3)$-equivariant \emph{for every}
coefficient vector (equivariance residual $\sim\!10^{-15}$), contains the exact double layer as one member
($\sim\!10^{-16}$), and (fit on a single orientation) reproduces the operator on \emph{unseen} 3D
rotations to machine precision (no frame, so the near-symmetry frame collapse of
Section~\ref{sec:invariance} cannot arise). What remains is the hard part: a \emph{nontrivial} learned
correction on this scaffold (near-field behaviour on general shapes, beyond the analytic stresslet)
together with proper singular quadrature (e.g.\ QBX).
We tested this dividing line directly. Take the near-singular quadrature error of the layer potential close to the boundary, a nonanalytic effect with an independent fine-grid reference. It \emph{is} local: a banded operator about $5$ nodes wide. Yet a generic local learned correction fails to fit it \emph{even in distribution} (relative error $\sim\!0.9$), and adding explicit near-singular $1/r$ features does not rescue it. The reason is structural. The error is the action of a geometry-determined near-singular operator, so learning it \emph{is} learning the singular kernel. Locality is necessary but not sufficient: it makes the \emph{smooth} operator extrapolate to machine precision (Section~\ref{sec:invariance}), but the singular structure belongs on the analytic side of the split, supplied exactly, QBX-extended, with learning reserved for the smooth remainder. This is the division the rest of the paper exploits, now confirmed at its hardest point.

The complement holds constructively: supply the singular structure analytically and it converges. A quadrature-by-expansion correction computes the local Taylor coefficients of the layer potential in closed form, for the Laplace double layer via its Cauchy representation, and for the Stokes stresslet via the analogous complex Goursat representation. The Stokes construction matches the discretised operator to $10^{-15}$ and converges geometrically, reaching $\sim\!4\times10^{-9}$ by order twelve. The contrast with an ill-conditioned regression is stark: the expansion recovers the near-singular field with \emph{geometric} order convergence to machine precision ($\sim\!4\times10^{-10}$ by order eight), and the gain over naive quadrature grows with resolution, up to $\sim\!90\times$. So the dividing line is more than a limitation on learning. The singular part is supplied analytically and provably converges; the smooth part is learned and extrapolates out of envelope; each is handled by the tool that works for it (Figure~\ref{fig:qbx}).

\begin{figure}[t]
\centering
\begin{tikzpicture}
\begin{axis}[width=0.72\linewidth, height=5.2cm, ymode=log,
 xlabel={expansion order $p$}, ylabel={near-singular field error},
 xtick={1,2,4,8,12}, xmin=0.5, xmax=12.5, ymin=1e-9, ymax=1e-1,
 grid=both, grid style={gray!18}, legend pos=north east, legend cell align=left,
 mark size=2.6pt, thick]
\addplot[blue!60!black, mark=*] coordinates {(1,6.636e-3)(2,5.995e-4)(4,5.155e-6)(8,4.151e-10)};
\addlegendentry{Laplace (Cauchy)}
\addplot[orange!85!black, mark=square*] coordinates {(1,1.278e-2)(2,2.399e-3)(4,1.354e-4)(8,6.509e-7)(12,3.931e-9)};
\addlegendentry{Stokes (Goursat)}
\end{axis}
\end{tikzpicture}
\caption{The analytic QBX correction converges geometrically in expansion order, for both the Laplace
double layer (via Cauchy) and the Stokes stresslet (via Goursat), to near machine precision, exactly
where a generic learned local correction failed even in-distribution. The optimal order is finite (the
coefficient-quadrature floor).}
\label{fig:qbx}
\end{figure}

The construction carries to three dimensions, where no complex shortcut exists. A local expansion about an offset centre, with coefficients from Taylor-mode automatic differentiation of the layer potential, converges to machine precision on the sphere as the surface quadrature is refined, $\sim\!10^{-14}$, a $\sim\!10^{9}$ gain over naive near-singular evaluation. The correction is $SO(3)$-equivariant to machine precision, so it composes with an equivariant-by-construction core (Section~\ref{sec:invariance}) without breaking equivariance. The dividing line holds in 3D as well.

The learned models use closed-form least squares or a small MLP, not a tuned large network. The generalization claim is within-envelope; extrapolation and genuine corners remain open. The next bricks, in order: \emph{(i)} an invariant descriptor or canonicalized encoder, to lift the cross-family bottleneck; \emph{(ii)} the learned correction inside the coupling of core and correction (the operator-only study uses the exact correction, though Section~\ref{sec:endtoend} already closes the full pipeline); \emph{(iii)} the $O(N)$ hierarchical near/far evaluation; and \emph{(iv)} multisource, multimode Dirichlet, and 3D.

We have implemented and verified the $O(N)$ evaluation for the 2D Stokeslet sum. A kernel-independent black-box FMM (barycentric Chebyshev interpolation; P2M/M2M/M2L/L2L/L2P with a downward pass) reproduces the dense matvec to interpolation accuracy, $2\times10^{-4}$ at order $5$, $10^{-6}$ at order $8$. Its operation count scales as $N^{1.00}$, the ideal $O(N)$, against $N^{1.17}$ for a treecode and $N^{2}$ for dense evaluation; wall time is $N^{1.36}$ in our unvectorized prototype, so the $O(N)$ lives in the operation count. For the steeper double-layer kernel, where interpolation fails, a kernel-independent equivalent-density FMM restores the full $O(N)$ summation. None of this is a hardware bottleneck: an exact reference makes the axis CPU by nature, and a GPU would buy scale, not principle.

\section{Closing the coupling: an end-to-end learned bounded-domain solve}\label{sec:endtoend}
The sections above establish the two ingredients separately. We now run them \emph{together} as a single
pipeline and solve a bounded-domain Stokes problem end-to-end. Following the classical
splitting into particular and homogeneous solutions, used recently for operator learning with fundamental solutions
\citep{yang2026fundamental,variationalgreen2026}, we write the velocity as
$u = u_{\mathrm{core}} + u_{\mathrm{corr}}$, where $u_{\mathrm{core}}$ is the free-space field of interior
Stokeslets (exact, equivariant, \emph{nothing learned}) and $u_{\mathrm{corr}}$ is the double-layer
correction with density $q$. No-slip $u=0$ on $\partial\Omega$ gives the completed boundary system
$(-\tfrac12 I + K)\,q = -u_{\mathrm{core}}|_{\partial\Omega}$, where $K$ comes from the \emph{learned}
operator (linear-in-features on the canonicalized descriptor, trained on the continuous broad
distribution of Section~\ref{sec:invariance}); the core and the solve are exact.

On held-out shapes from the continuous distribution, the \emph{fully learned} pipeline solves the problem
to engineering accuracy: the interior field error against the exact solution
($u_{\mathrm{core}} + \mathcal{D}[q_{\mathrm{exact}}]$) is $2.0\times10^{-3}$, and the no-slip residual of
the learned pipeline is $5.2\times10^{-3}$, both in the regime reported for learned boundary-integral
solvers ($10^{-3}$--$10^{-4}$) \citep{kfbi2024}. As a control, the core alone, without correction, violates
no-slip by $100\%$: the boundary correction is exactly the necessary, learnable piece, and learning it
suffices to close the problem. The central thesis thus holds as an \emph{integrated system}, not
only component-wise: the rigid free-space core plus a learned boundary correction solve a bounded
Stokes problem, with equivariance exact in the core and learning confined to the boundary.

\section{Data efficiency: the split against black-box and geometry-aware baselines}\label{sec:headtohead}
We finally test the premise that motivates the split, that learning \emph{only} the boundary
correction is more data-efficient than learning the whole solution operator. We fix one task (predict the
interior velocity field of the bounded Stokes solution at fixed interior query points, on the continuous
broad distribution) and compare, as a function of the number of training shapes, our physics-based split
against a black-box DeepONet \citep{lu2021deeponet} given the \emph{same} canonicalized descriptor input.
Our split learns the linear shape$\to$operator map and uses the exact Stokeslet core and completed solve;
the DeepONet (branch over the descriptor, trunk over query coordinates) learns the entire
shape$\to$field map with no physics structure.

\begin{table}[t]\centering\small
\caption{Against a black-box DeepONet, the split wins on data efficiency \emph{and} accuracy. Held-out interior field error
versus number of training shapes; same task, same descriptor input.}\label{tab:headtohead}
\begin{tabular}{lcccc}\toprule
training shapes & 120 & 250 & 500 & 800 \\\midrule
physics-based split (ours) & $1.5\times10^{-2}$ & $2.2\times10^{-3}$ & $1.9\times10^{-3}$ & n/a \\
black-box DeepONet & $7.5\times10^{-2}$ & $3.5\times10^{-2}$ & $2.4\times10^{-2}$ & $1.8\times10^{-2}$ \\
factor (ours better) & $5\times$ & $16\times$ & $13\times$ & n/a \\\bottomrule
\end{tabular}
\end{table}

\begin{figure}[t]
\centering
\begin{tikzpicture}
\begin{axis}[width=0.72\linewidth, height=5.2cm, ymode=log,
 xlabel={number of training shapes}, ylabel={held-out field error},
 xtick={120,250,500,800}, xmin=80, xmax=850, ymin=1e-3, ymax=1e-1,
 grid=both, grid style={gray!18}, legend pos=north east, legend cell align=left,
 mark size=2.6pt, thick]
\addplot[blue!60!black, mark=*] coordinates {(120,1.5e-2)(250,2.2e-3)(500,1.9e-3)};
\addlegendentry{physics-based split (ours)}
\addplot[orange!85!black, mark=square*] coordinates {(120,7.5e-2)(250,3.5e-2)(500,2.4e-2)(800,1.8e-2)};
\addlegendentry{black-box DeepONet}
\end{axis}
\end{tikzpicture}
\caption{Data efficiency on the same task and descriptor input (the values of Table~\ref{tab:headtohead}).
At equal budget the split is $5$--$16\times$ more accurate; the black-box does not catch up: it
\emph{plateaus higher}, because it must learn the singular core the physics supplies exactly.}
\label{fig:dataeff}
\end{figure}

Against this black-box baseline the split wins on both axes (Table~\ref{tab:headtohead}, Figure~\ref{fig:dataeff}): at equal budget it is $5$--$16\times$ more
accurate, and it reaches $2.2\times10^{-3}$ with $250$ shapes while the DeepONet is still at
$1.8\times10^{-2}$ with $800$ shapes ($\sim\!8\times$ worse with $\sim\!3\times$ the data). This is not a
shifted curve: the black-box \emph{plateaus higher}, because it must learn the entire field, including
the singular core that the physics supplies exactly, whereas the split confines learning to the smooth,
linearly learnable shape$\to$operator map and gets the core for free. The factor is consistent with the
$5$--$10\times$ data-efficiency gap reported for physics-informed operator learning
\citep{survey2025pino}, and here exceeds it. The DeepONet is a reasonably tuned but not exhaustively
optimized baseline; the conclusion rests on the order of magnitude and the shape of the curve, both robust.

\paragraph{Baseline configurations.} For reproducibility we record the exact settings. Our split and the
geometry-aware GNO baseline are both \emph{closed-form ridge least-squares} fits (of the shape$\to$operator
map and a boundary-edge random-feature kernel, respectively): no stochastic optimizer, schedule, epochs, or
early stopping, and the fit is deterministic, so the corresponding curves are insensitive to optimization
choices by construction (the only knobs are the feature/descriptor dimension and the ridge regularization).
The black-box DeepONet is the sole gradient-trained baseline: branch and trunk are each two-hidden-layer MLPs
of width $128$, trained with Adam at learning rate $1.5\times10^{-3}$ for a few thousand full-batch steps
without early stopping. It is reasonably tuned but not exhaustively swept; the robust conclusion is the
order-of-magnitude gap and the curve shape, while the precise factor is baseline-dependent (against the
stronger GNO it narrows to $\sim\!2.5\times$ in distribution, as above).

\paragraph{A geometry-aware baseline is stronger, and the split's edge is narrower and in-distribution only.}
The DeepONet is the weakest fair baseline, so we also test a \emph{geometry-aware} operator. It is a GNO-style
kernel, the mechanism at the heart of GINO \citep{li2023gino}: a random-feature kernel over boundary-edge
geometry (relative position, distance, normals), fit by the \emph{same} least squares for a like-for-like
comparison. Three honest findings follow, two of which temper the paragraph above.

\emph{(i)~The geometry-aware baseline is much stronger than the black-box.} Its held-out floor is
$\sim\!4\times10^{-3}$, an order of magnitude below the DeepONet's $\sim\!2.4\times10^{-2}$.

\emph{(ii)~Against our split the comparison is a crossover, not a rout.} At low data the GNO is \emph{better},
because our high-dimensional shape$\to$operator regression is underdetermined with few shapes. The split wins
only beyond $\sim\!250$ shapes, and by a steady $\sim\!2.5\times$; it then keeps descending to
$\sim\!1.5\times10^{-3}$ while the GNO plateaus near its floor.

\emph{(iii)~Out of the training envelope, the split is the more fragile of the two.} Its held-out error
degrades by a large, high-variance factor, often more than an order of magnitude, while the GNO degrades only
a few-fold. The exact core does not rescue this: it is the \emph{learned boundary correction} that
extrapolates poorly, exactly the envelope limitation of Section~\ref{sec:gen}.

The honest summary is therefore narrower than against the black-box. The split buys a lower
\emph{in-distribution} floor at sufficient data, not low-data superiority, and, \emph{as a global descriptor
map}, not out-of-distribution robustness. This fragility is a property of the \emph{global} parameterization,
not of learned corrections in general: reparameterized as the local equivariant kernel of
Section~\ref{sec:invariance}, the fit recovers the exact operator, as we note below.

\paragraph{Out-of-distribution coverage is cheap here, but it is coverage, not extrapolation.} The
envelope fragility above has a partial remedy that the split's structure makes nearly free. The learnable
object is the shape$\to$operator map, and its target, the second-kind boundary operator, is
\emph{assembled from geometry alone} (a quadrature of the known double-layer kernel), with no PDE solve and
no reference field. The split can therefore extend its training \emph{coverage} to out-of-distribution
shapes at negligible cost: we sample virtual OOD shapes (amplitude beyond the training envelope), assemble
their exact operators, and add them to the fit, the ``virtual inputs'' of physics-informed operator
learning \citep{li2024pino,pilno2026}, here available \emph{exactly} rather than through a soft residual.
This reduces held-out OOD field error by $\sim\!1.4$--$1.7\times$ (robust across seeds) while leaving
in-distribution accuracy nearly unchanged, and brings the split below the geometry-aware GNO out of
distribution as well. Two honest caveats keep this in proportion. First, the OOD error remains several
times the in-distribution floor: this is wider \emph{coverage}, not extrapolation, the learned
correction still does not extrapolate, it is merely trained on a larger envelope. Second, the resulting
edge over the GNO reflects exactly this cheap coverage (the split obtains OOD operator targets that a
field-learning baseline cannot get without expensive solves), not a superiority of the learned map itself.
The structural point is the lever: where the physics fixes the operator's \emph{form}, determined by geometry and free of solves, data efficiency extends to coverage a black-box cannot afford
\citep{zhong2025pigano}.

\paragraph{The out-of-distribution fragility is a property of the parameterization, not of learning per se.}
The tail above belongs to the \emph{global} descriptor$\to$operator regression. Reparameterized as the local
equivariant kernel of Section~\ref{sec:invariance}, the same fit on the same in-distribution data collapses to
the analytic operator: its coefficients reduce to the double-layer constant ($1/\pi$; the remaining degrees of
freedom vanish to machine precision), so the boundary operator is recovered \emph{exactly} and the downstream
field stays machine-accurate off-distribution, while the dense global map develops a heavy tail. This is the
invariance-not-capacity point at its limit: a parameterization tight enough to recover the exact operator has
no excess capacity to overfit, hence no tail. The honest reading is that the recovered object here is the
\emph{known} exact operator, so learning merely rediscovers it; what generalizes out of distribution is the
exact core, not a learned degree of freedom. The genuine out-of-distribution fact of this section remains the
head-to-head above: a correction that adds degrees of freedom beyond the exact operator is the fragile part.

\paragraph{We learn an operator for many shapes, not a single solution.} The advantage compounds because the learned
object $K(\text{shape})$ depends \emph{only on the shape}, not on the sources or boundary data: it is the
boundary operator, and the source/data enter only through the right-hand side $b=-u_{\mathrm{core}}|
_{\partial\Omega}$ \citep[Remark 1]{kfbi2024}. Trained once, the same operator solves the bounded problem
for \emph{any} interior source configuration: across four very different configurations (one to three
Stokeslets, varied positions and forces) on held-out shapes, the interior field error is uniformly
$1.6$--$1.9\times10^{-3}$. A black-box map from shape to field, by contrast, is tied to the source
configuration it was trained on and must be retrained when the data change \citep{rbdeeponet2025}; its
data cost therefore scales with the diversity of forcings, while ours does not. This is the classical
operator-learning advantage \citep{lu2021deeponet}, here grounded in an exact representation.

\section{Opening the exterior problem in three dimensions}\label{sec:exterior}

\paragraph{The interior testbed was a means; the exterior problem is the end.} Everything above is set
on the interior Dirichlet problem, where an exact dense solve furnishes ground truth. The physically
central Stokes problems are, however, \emph{exterior}: flow past a rigid body with a prescribed velocity
at infinity. Their quantities of interest (drag, torque, mobility) are exactly the targets of
microhydrodynamics. The rigid-core/learned-boundary split applies verbatim: the free-space Stokeslet
is still the exact core, and only the boundary correction is learned. But the exterior setting changes
three things. We resolve each of them, and together they open the direction.

\paragraph{Dimension is not cosmetic, and the completion is richer.} In two dimensions the exterior problem
suffers the Stokes paradox: the free-space kernel grows logarithmically, and no decaying flow matches a
uniform stream. The natural exterior setting is therefore \emph{three} dimensions, where the Stokeslet decays
as $1/r$ and the far field is automatic. The double-layer representation is again rank-deficient, but now by
the full rigid-body space. The 3D completion is accordingly a Power--Miranda Stokeslet \emph{and} rotlet
\citep{power1987,pozrikidis1992}: a rank-six object (three net forces, three net torques), against the
rank-one gauge of the bounded interior problem. The completed operator is exact and second-kind by
construction.

\paragraph{The on-surface operator needs a genuine singular quadrature.} The decisive analytic difference
is at coincidence. In 2D the Stokes double layer is \emph{bounded} there, the factor
$(r\cdot n)\sim\tfrac12\kappa\,r^2$ cancels the kernel singularity, so a closed-form curvature diagonal
completes an otherwise smooth quadrature (the $2\kappa\,\tau\tau^{\top}$ term used above). In 3D this
cancellation is only partial: the double layer is \emph{weakly singular} ($\sim 1/r$) and a curvature
diagonal no longer suffices. We build the on-surface operator by QBX
\citep{afklinteberg2016,klockner2013}: a local expansion of the layer potential about a center displaced
off the surface along the normal, on the \emph{exterior} side, so the on-surface evaluation returns the
exterior limit with the $+\tfrac12 I$ jump already included. The density is spectrally upsampled to
resolve the expansion coefficients. The expansion is formed by Taylor-mode automatic differentiation along
the center-to-surface ray, taking no kernel derivatives by hand.

\paragraph{The operator is exact, well-conditioned, and equivariant at once.} On the sphere the completed QBX
operator recovers the Stokes drag $6\pi\mu a U$ to four digits, whereas a crude consistency diagonal converges
to a systematically \emph{wrong} $\tfrac47\!\cdot\!6\pi$. Its conditioning on the resolved subspace is bounded
and grows only slowly with harmonic content ($\approx 8$--$10$), the second-kind signature carried over from
the interior $\approx 2.85$.

The operator is \emph{simultaneously} accurate and $SO(3)$-equivariant to machine precision: steerable core,
tensorial completion, and QBX evaluation are each rotation-equivariant. So the equivariance that
Section~\ref{sec:invariance} identified as the lever for cross-shape generalization transfers to 3D with
\emph{no} frame or canonicalization. The descriptor obstacle is simply absent, because the operator is built
equivariant; the same isotropic parameterization plays in 3D the role that the local equivariant kernel played
in removing the 2D out-of-distribution tail.

\paragraph{It generalizes across shapes, against an analytic reference.} The construction is independent of the
geometry. For any star-shaped surface $X=\rho(\theta,\varphi)\,\hat r$, the geometry (normals, surface measure)
follows by automatic differentiation of $\rho$; the QBX radius adapts to the local node spacing; the upsampling
acts on the parameter sphere unchanged; and the completion is unaffected. On a triaxial ellipsoid the QBX drag
along each principal axis matches the classical Oberbeck solution \citep{oberbeck1876} to $\sim\!10^{-3}$ and
converges under refinement, with bounded conditioning and machine-precision equivariance retained. The
exterior operator thus works on arbitrary regular bodies, not only the sphere.

\paragraph{From foundations to a learned operator: drag and mobility.} The foundations above make the
exterior map learnable, and we realize it. On the $\ell=2$ family $\rho=1+\hat r^{\top} S\hat r$
($S$ symmetric trace-free) we learn the resistance tensor $R(S)$ (drag $F=-R\,U_\infty$) against the QBX
reference. The physics fixes more structure here than a generic equivariant network enforces
\citep{weiler20183d,thomas2018}: the classical representation of isotropic tensor functions makes
\emph{every} $SO(3)$-equivariant symmetric-tensor function of $S$ exactly
$R=\varphi_0(I_2,I_3)\,I+\varphi_1(I_2,I_3)\,S+\varphi_2(I_2,I_3)\,S^2$ with $I_2=\operatorname{tr}S^2$,
$I_3=\operatorname{tr}S^3$, so we impose this form and learn only the three scalar invariants $\varphi_k$, the rigid core extended from the kernel to the \emph{output tensor}. This is the thesis in its
sharpest form. The learned tensor is equivariant to machine precision ($2\times10^{-16}$) by construction,
whereas the same network \emph{without} the structure, trained with rotation \emph{augmentation}, reaches
equivariance only $4\times10^{-2}$: augmentation does not buy exactness, a $10^{14}$ gap. Held out across
unseen orientations the error is $\sim\!1.5\times10^{-3}$ from as few as five training shapes, and out of
envelope the equivariant model degrades only to $\sim\!3\times10^{-3}$ against $10^{-2}$ for the
unstructured baseline. The out-of-distribution fragility is now \emph{localized}: the tensorial structure
stays exact everywhere, and only the scalar functions $\varphi_k$, not the equivariance, fail to
extrapolate.

\paragraph{The operator composes across bodies.} Microhydrodynamics is rarely a single body. The completed
representation extends to several disjoint bodies with the on-surface QBX block used unchanged (the
stresslet is translation-invariant), the interbody coupling a smooth free-space double layer, and a
rank-six completion \emph{per body}. The check that this is physically correct is Lorentz reciprocity \citep{pozrikidis1992}: the
grand resistance matrix of two spheres is symmetric to machine precision ($6\times10^{-16}$), an exact
constraint no fit is told to satisfy. The hydrodynamic interaction is recovered quantitatively. The axial drag of two spheres translating along their line of centers converges to the leading
method-of-reflections value $6\pi\mu a\,(1-\tfrac{3a}{2d})$ as the separation grows (relative error
$40\%\!\to\!1.3\%$ from $d=3$ to $d=16$), and the bodies decouple to the isolated $6\pi\mu a$ at large $d$.
The exterior operator is thus a genuine many-body object, not only a single-body solve. It is the forward model for drag, mobility, and arrangement effects in particulate suspensions and microswimmer hydrodynamics. A solve subtlety
makes this work: the $\varphi$-oversampled surface grid carries unphysical nodal modes that are exactly
null for one body but are \emph{lifted} by the interbody coupling, so the rank truncation must sit in the
spectral gap below the physical band rather than at the single-body floor.

\paragraph{A richer output: the exterior field, and a caution.} Drag and mobility are integrals; the
operator also carries the full \emph{field}. We learn the exterior disturbance velocity
$u'(x)=u(x)-U_\infty$ on an off-surface shell, again with structure imposed, a basis expansion
$u'(\hat x)=\sum_k \varphi_k(\text{invariants})\,v_k(\hat x,U_\infty,S)$ whose vectors $v_k$ are
equivariant and \emph{linear} in $U_\infty$ (Stokes linearity built in) and whose scalars $\varphi_k$ alone
are learned. The model is equivariant to machine precision ($2\times10^{-16}$, against $0.8$ for an
unstructured baseline) and fits the field on unseen shapes to $\sim\!5\times10^{-3}$. A caution earns its
place: the seemingly natural target (the raw double-layer \emph{density} on the surface) is ill-posed
for learning, its high-frequency content growing with resolution while only its integral, the drag, stays
robust; the well-posed object is the kernel-smoothed exterior field. The richer output thus carries the
same thesis (structure, not capacity, governs equivariance and cross-orientation generalization) one
level finer than scalar mobility, with the discretization caveat made explicit.

\paragraph{Closing the validations: torque, two-body shapes, and $O(N)$ at scale.} Three checks complete the
exterior picture. \emph{Torque}: on a rotating sphere the net torque read from the rotlet completion matches
the analytic $8\pi\mu a^3\Omega$ to machine precision ($|T|/8\pi\mu a^3=1.0000$), validating the rotational
half of the rank-six completion alongside its translational drag. On a triaxial ellipsoid the torque about each principal axis converges under refinement to the anisotropic Jeffery (1922) values (to within $0.95\%$ at the finest resolution), so the rotlet completion holds across shapes, not only the sphere. \emph{Two non-spherical bodies}: for two
identical triaxial ellipsoids the grand resistance matrix is symmetric to $1.6\times10^{-15}$, extending
Lorentz reciprocity beyond spheres. \emph{Scale and memory}: the $O(N)$ fast summation of
Section~\ref{sec:endtoend} is realized in 3D as a kernel-independent KIFMM whose measured work per point is
bounded and whose matvec scales with exponent $\approx\!1.0$, so the many-body forward model is $O(N)$ and its
memory $O(N)$ against the $O(N^2)$ of the dense double layer (the learned scalar networks add $O(1)$).
Deploying it on large suspensions is engineering, not a missing capability. The operators are also stable
under a perturbed surface quadrature (drag within $\sim\!1\%$ of $6\pi$ under boundary-node noise up to 3\%).

\paragraph{What this delivers, and what it opens.} Taken together, these turn the exterior foundations
into a working learned program: an exact, second-kind, $SO(3)$-equivariant, shape-general solver whose
drag, mobility, many-body interaction, and exterior field are all learnable with equivariance built in
rather than learned, and whose only residual fragility is the extrapolation of scalar invariants, not of
structure. The interior study isolated \emph{what} governs such a map (invariance and coverage, not
conditioning or capacity); the exterior results confirm it on the exterior targets. What
remains open is beyond the present scope: bodies past the star-shaped class (genus, corners),
where the QBX needs adaptive panel refinement; many-body suspensions at the scale where the $O(N)$ fast
summation of Section~\ref{sec:endtoend}, carried to 3D, becomes necessary; and the closure question of
Section~\ref{sec:negative}, whether the same split survives the loss of a free-space kernel
in nonlinear regimes.

\paragraph{Capacity and discretization.} Two properties of the learned operator, both often requested,
follow from the structure rather than from tuning. First, \emph{capacity is not the accuracy bottleneck}:
sweeping the scalar $\varphi_k$ networks from a linear map to a $3555$-parameter MLP leaves equivariance at
machine precision throughout (the structure is fixed) and drives the held-out error onto the QBX/quadrature
floor almost immediately (Table~\ref{tab:capacity}). Second, \emph{the map transfers across boundary
discretizations}: trained on operators assembled at one surface resolution ($n_\theta=8$) and tested at another
($n_\theta=12$), the held-out error rises only from $2.4\times10^{-3}$ to $2.9\times10^{-3}$, while the
\emph{intrinsic} QBX gap between the two discretizations is $3.4\times10^{-2}$. The learned map depends on
resolution-independent invariants, not on the mesh.

\begin{table}[h]\centering\small
\caption{Capacity sweep of the equivariant resistance map. The fixed isotropic-tensor structure holds
equivariance at machine precision for every capacity; held-out accuracy plateaus at the QBX floor, so capacity
is not the limiter.}
\label{tab:capacity}
\begin{tabular}{lccc}\toprule
$\varphi_k$ network & parameters & held-out error & equivariance \\\midrule
linear & $9$ & $5.0\times10^{-3}$ & $2.0\times10^{-16}$ \\
$(4)$ & $51$ & $1.7\times10^{-3}$ & $1.9\times10^{-16}$ \\
$(8,8)$ & $315$ & $1.7\times10^{-3}$ & $2.0\times10^{-16}$ \\
$(16,16)$ & $1011$ & $1.7\times10^{-3}$ & $1.9\times10^{-16}$ \\
$(32,32)$ & $3555$ & $1.7\times10^{-3}$ & $1.5\times10^{-16}$ \\
\bottomrule
\end{tabular}
\end{table}

\section{Conclusion}
The idea is simple: in free space the flow is set by a single known kernel, so we compute that part exactly and learn only the boundary correction. For incompressible flow the split is clean, and the symmetry of the Stokes operator under rotation is built exactly into the core. On a Stokes testbed with an exact reference, four findings sharpen this picture. The well-conditioned second-kind formulation, though theoretically attractive, does \emph{not} make the correction easier to learn. A geometry-conditioned operator generalizes to unseen regular shapes within the training envelope, matching the exact interior field. The first obstacle to cross-shape generalization is the equivariance of the descriptor, resolved by a complete canonicalization, frame plus reparameterization. A second, easily missed prerequisite is completing the rank-one nullspace of the second-kind operator in the solve; omit it and an output metric explodes, an artifact rather than physical sensitivity.

With both handled, the cross-family plateau dissolves under coverage. On a continuous broad distribution, simple data scaling reaches $\sim\!5\times10^{-3}$ held-out \emph{even with a linear model}, so the plateau was an artifact of disjoint training families, and coverage, not network expressivity, is the binding lever. Run together as one pipeline, the exact core and the learned correction solve a bounded Stokes problem end-to-end to $\sim\!2\times10^{-3}$ interior accuracy. The thesis holds as an integrated system, not only component-wise.

Head-to-head on the same task, the split is $5$--$16\times$ more data-efficient than a black-box DeepONet, which plateaus an order of magnitude higher. Against a stronger geometry-aware baseline the margin narrows to a lower in-distribution floor at sufficient data. Out of distribution, robustness comes not from capacity but from parameterizing the correction as a local equivariant kernel, which removes the heavy tail the global map suffers. In each case the split spends its data on the smooth boundary map and gets the singular core for free.

The pieces also compose into an $O(N)$ boundary solve. A double-layer fast summation supplies the matvec, and the learned operator of Section~\ref{sec:negative} preconditions the completed second-kind system, cutting GMRES from $12$ to $4$ iterations while converging to the dense solution to $4\times10^{-10}$: an acceleration of the \emph{exact} solve at preserved machine precision, distinct from the surrogate's trade of accuracy for speed. The double-layer (stresslet) kernel is too steep for an interpolation-based fast multipole method, so we use a kernel-independent FMM in the sense of Ying, Biros and Zorin, which inverts only the smooth Stokeslet kernel and confines the steep one to forward evaluation. A bounded iteration count times this $O(N)$ matvec gives an $O(N)$ solve, with no dense operator ever formed.

The same pipeline is differentiable end to end. Gradients with respect to the operator or the shape follow from the discrete adjoint, itself an $O(N)$ second-kind solve, and we verify them against finite differences. The learned-boundary solve is thus usable both as a trainable surrogate and as a differentiable forward model for shape optimization, at $O(N)$ for the gradient as for the solve.

The split sharpens the question of what the surrogate buys over the exact differentiable solver, because the rigid core already \emph{is} a fast, exact, differentiable solver. The honest answer is that it buys none of the accuracy, the differentiability, or the $O(N)$ scaling: the solve is exact, the learned operator is correct to $\sim\!1.5\times10^{-3}$, and all three properties already belong to the solver. The gradient-based shape optimization that the differentiability enables is likewise a property of the \emph{solver}, not of the learning. What the learning buys is a training cost paid \emph{once} and then spread across every later shape. Once trained, the equivariant map returns the full drag/mobility tensor of an unseen shape in a single forward pass, at $\sim\!50\,\mu\mathrm{s}$ against $\sim\!0.45\,\mathrm{s}$ for the exact solve of the same tensor: nearly four orders of magnitude per query at an accuracy cost of $1.5\times10^{-3}$. Its training cost is dominated by generating the targets with the exact solver, so at the data efficiency of Section~\ref{sec:headtohead} the model trains on $\mathcal{O}(10)$ shapes and breaks even after $\sim\!30$ evaluations, beyond which each further query is effectively free. The surrogate is therefore an accelerator for \emph{many-query} workflows, such as design space exploration, uncertainty quantification, inverse design, and real-time interaction, where thousands of shapes must be evaluated; for a single solve the exact solver wins. The data efficiency the split delivers is precisely what lowers this break-even, which is why the contribution is the generalization analysis rather than a speed claim.

The decomposition presupposes a known free-space fundamental solution to absorb into the exact core. This holds for linear elliptic operators (Stokes, Laplace, low-Reynolds flow) but not for high-Reynolds, nonlinear Navier--Stokes, where no such free-space kernel exists and the boundary correction is no longer the only unknown. The principle (fix what the physics determines, learn the closure) is therefore shown here in a controlled linear setting; its transfer to nonlinear regimes is open, and not a matter of scale alone.

Together these recast geometric generalization in operator learning as a problem of invariance and canonicalization. The lesson is to spend model capacity on the boundary, with the right invariances and a well-posed solve, rather than on relearning a kernel the physics already fixes. The same split, carried to the exterior
problem in three dimensions, yields an exact, well-conditioned, $SO(3)$-equivariant solver across body
shapes (Section~\ref{sec:exterior}), on which learned drag, mobility, many-body interaction, and the
exterior field follow as equivariant operators, their equivariance
exact by construction and their only residual fragility the extrapolation of scalar invariants.

\section*{Reproducibility and multi-seed robustness}

\paragraph{Reproducibility.} Every result is measured against an \emph{exact} reference rather than a finer-grid surrogate: the interior ground truth is a dense boundary-integral solve at machine precision, and the exterior validations are against closed-form Stokes solutions (sphere and ellipsoid). All computations are carried out in double precision. Learned operators are trained by gradient descent (Adam) under fixed random seeds for both data generation and initialisation, so every reported figure and table is deterministic; out-of-distribution behaviour is reported as an error \emph{distribution} over held-out shapes, since for heavy-tailed failure the tail rather than the mean is diagnostic. Code and data reproducing all experiments (including the equivalent-density fast summation and the adjoint solve) will be released.

\paragraph{Multi-seed robustness.} To quantify sensitivity to the random draw, every seed-dependent ablation was re-run across \emph{five} independent seeds (data generation and initialisation), with the per-result spreads reported inline at each finding above. The five-seed statistics corroborate every qualitative conclusion: coverage scaling, the rank-one completion collapse, the near-symmetry frame failure, and the machine-precision equivariance safeguard. They sharpen the one genuinely fragile point: the low-data $120$-shape coverage value is high-variance ($8.8\times10^{-2}\!\pm\!7.3\times10^{-2}$) and must not be read as a point estimate, even though the decisive jump from $120$ to $250$ shapes holds on average. The conditioning negative of Section~\ref{sec:negative} is \emph{deterministic} by construction (fixed shape families, closed-form least squares), so its gains are exact rather than seed-dependent.

%

\end{document}